\newcommand{\eat}[1]{}
\DeclareSIUnit[group-digits=true,group-separator={,}]\GB{GB}
\DeclareSIUnit[group-digits=true,group-separator={,}]\MB{MB}
\DeclareSIUnit[group-digits=true,group-separator={,}]\sec{s}
\DeclareSIUnit[group-digits=true,group-separator={,}]\msec{ms}
\DeclareSIUnit[group-digits=true,group-separator={,}]\min{min}
\tiny\color{gray}, 
\lstdefinelanguage{scala}{
	morekeywords={abstract,case,catch,class,def,%
		do,else,extends,false,final,finally,%
		for,if,implicit,import,match,mixin,%
		new,null,object,override,package,%
		private,protected,requires,return,sealed,%
		super,this,throw,trait,true,try,%
		type,val,var,while,with,yield},
	otherkeywords={=>,<-,<\%,<:,>:,\#,@},
	sensitive=true,
	morecomment=[l]{//},
	morecomment=[n]{/*}{*/},
	morestring=[b]",
	morestring=[b]',
	morestring=[b]"""
}
\lstdefinelanguage{hypotethical-spark}{
	morekeywords={cond,var,mult,on,additive,table,foreach,PROJECT,JOIN,SUM,BY,KEY,AS,TABLE,COND,FOREACH,code},
}
\newcommand{\mypara}[2][\smallskip]{#1\emph{\bf #2.}\xspace}
\newcommand{\mysubpara}[1]{\emph{#1.}\xspace}
\newif\ifthesis
\newif\iftr
\newcommand{\tname}[1]{\texttt{#1}}
\newcommand{\cname}[1]{\texttt{#1}}
\newcommand{\AMPLab}[1][]{\textsl{AMPLab}}
\newcommand{\ppm}{\textsl{ppm}} 
\newcommand\assignment\rho
\let\oldnl\nl
\newcommand{\nonl}{\renewcommand{\nl}{\let\nl\oldnl}}
\newcommand{\Leavesof}[1]{\ensuremath{\mathsf{L}(#1)}}
\newcommand{\Varsof}[1]{\ensuremath{\mathsf{V}(#1)}}
\newcommand{\Monsof}[1]{\ensuremath{\mathsf{M}(#1)}}
\newcommand{\numV}[1]{\ensuremath{|#1|_{_{\mathsf{V}}}}}
\newcommand{\numM}[1]{\ensuremath{|#1|_{_{\mathsf{M}}}}}
\newcommand{\subV}[2]{#1\!\!\downarrow_{#2}}
\newcommand{\numMsubV}[2]{\ensuremath{|{#1\!\!\downarrow_{#2}}|_{_{\mathsf{M}}}}} 
\newcommand{\numVsubV}[2]{\ensuremath{|{#1\!\!\downarrow_{#2}}|_{_{\mathsf{V}}}}}
\newcommand{\ID}{\mathsf{ID}}
\newcommand{\id}{\nu}
\newcommand{\bu}{n}
\newcommand{\Nat}{\mathbb{N}}
\newcommand{\range}[2][1]{\{#1..#2\}}
\newcommand{\upp}[4]{#1\langle#2,#3,#4\rangle}
\newcommand{\ignore}[1]{}
\newcommand{\realignore}[1]{}
\newtheorem{theorem}{Theorem}
\newtheorem{lemma}[theorem]{Lemma}
\newtheorem{proposition}[theorem]{Proposition}
\newtheorem{claim}[theorem]{Claim}
\newtheorem{example}[theorem]{Example}
\newtheorem{definition}[theorem]{Definition}
\newtheorem{proof}[theorem]{Proof}
\crefname{example}{example}{examples}
\Crefname{example}{Example}{Examples}
\title{Hypothetical Reasoning via Provenance Abstraction}
\author{
  Daniel Deutch\\
   \small{Tel Aviv    University}
  \and
  Yuval Moskovitch\\
   \small{Tel Aviv    University}
   \and
   Noam Rinetzky\\
   \small{Tel Aviv    University}
}
\date{}
\begin{document}

\maketitle

\begin{abstract}
    Data analytics often involves hypothetical reasoning: repeatedly
modifying the data and observing the induced effect on the computation
result of a data-centric application. Previous work has shown that
fine-grained data provenance can help make such an analysis more
efficient: instead of a costly re-execution of the underlying
application, hypothetical scenarios are applied to a pre-computed provenance expression. However,
storing provenance for complex queries and large-scale data leads to a significant overhead, which is often a barrier to the incorporation of provenance-based solutions.    

To this end, we present a framework that allows to reduce provenance size. Our approach is based on reducing the provenance granularity using user defined abstraction trees 
over the provenance variables; the granularity is based on the anticipated hypothetical scenarios.  We formalize the tradeoff between 
provenance size and supported granularity of the hypothetical reasoning, and study the complexity of the resulting optimization problem, provide efficient algorithms for tractable cases 
and heuristics for others. We experimentally study the performance of our solution for various queries and abstraction trees. Our study shows that the algorithms generally lead to substantial speedup of hypothetical reasoning, with a reasonable loss of accuracy.
\end{abstract}


\section{Introduction}\label{sec:intro}

Recent years have seen a flourish of research on {\em data provenance}. In a nutshell, provenance captures the essence of the computation performed by queries or other data-intensive applications. In particular, a prominent line of work is centered around the model of {\em provenance polynomials} (also termed how-provenance). The idea is to compute a symbolic algebraic expression whose indeterminates may correspond to tuple identifiers \cite{GKT-pods07} or tuple cells \cite{AggPaper}; the way that the indeterminates are combined through the algebraic operations of the polynomial reflects the way the corresponding tuples/cells were manipulated by the query/application.

Once generated, an important way of using provenance polynomials is to {\em valuate} the variables occurring in them, and compute the resulting value. This allows to observe the effect of hypothetical scenarios -- modifications or assertions with respect to the input data -- on the computation result. 

\begin{example}[Running example]\label{Ex:intro-example}
	Consider a telephony company's database fragment given in Figure \ref{database}. The   \tname{Cust} table contains information about the customers, including their ID, calling plan and zip code; the \tname{Calls} table contains the call durations in minutes, totaled by month for each customer; and the \tname{Plans} table contains the \emph{price per minute} (\emph{\ppm}) of every plan, where the \ppm{} may vary from month to month. The company offers several calling plans: Small business plans ($SB1$, $SB2$), enterprises plan ($E$), plans for youth ($Y1$, $Y2$) for families ($F1, F2, F3$) and for veterans ($V$), and standard plans ($A$, $B$). Each customer is subscribed to one calling plan. 
	The following query computes the company's revenues per zip code:
	%
	%
		\lstset{language=sql,basicstyle=\small\normalfont\ttfamily,escapeinside={(*}{*)}}
		\begin{lstlisting}
		SELECT Zip, SUM(Calls.Dur * Plans.Price)
		FROM Calls, Cust, Plans
		WHERE Cust.Plan = Plans.Plan
		AND Cust.ID = Calls.CID
		AND Calls.Mo = Plans.Mo
		GROUP BY Cust.Zip
		\end{lstlisting}
	
	\vspace{-3ex} 
	The query computes the revenues of the company 
	by summing the per-customer-revenue, computed by multiplying the duration of calls by the \ppm{} of 
	the customer's plan, and aggregating the result per zip code.
	An analyst working for the company may be interested in the effect of
	possible changes to the call prices on the company's revenues. For example, what if the \emph{price per minute} (\emph{\ppm}) of all plans are decreased by 20\% in March? Or what if	the \ppm{} of  
	the business calling plans are increased by $10\%$?
	To support such scenarios, we can {\em parameterize} the (multiplicative) change in price, assigning, e.g., a distinct parameter $m_i$ to capture the change in month $i$. 
	In this example we would then get as answer to the above query, instead of a single aggregate value, a symbolic provenance expression of the form 
	$654.2\cdot m_1 + \cdots + 688.8\cdot m_3 + \cdots$.
\end{example}

\setlength{\textfloatsep}{2pt}
\begin{figure*}
\centering
{\footnotesize
 \begin{tabular}{c}
	\begin{tabular}{|l|l|l|}
		\multicolumn{3}{c}{\tname{Cust}}\\ \hline
		\multicolumn{1}{|c|}{\cname{ID}} &
		\multicolumn{1}{c|}{\cname{Plan}} &
		\multicolumn{1}{c|}{\cname{Zip}} 
		\\ \hline
		1       &   A & 10001\\
		2       &   F1 &10001\\
		3       &   SB1&10002 \\
		4       &   Y1 &10001\\
		5       &   V &10001\\
		6       &   E &10002\\
		7       &   SB2 &10002\\
		$\dots$ & $\dots$ & $\dots$ \\
		\hline
		
	\end{tabular}
	~
	\begin{tabular}{cc}
		\multicolumn{2}{c}{\tname{Calls}}\\
		\begin{tabular}{|l|l|r|}
			\hline \multicolumn{1}{|c|}{\cname{CID}}   &
			\multicolumn{1}{c|}{\cname{Mo}} &  \multicolumn{1}{c|}{\cname{Dur}}
			\\ \hline
			1 & 1   & 522\\
			2 & 1   & 364\\
			3 & 1   & 779\\
			4 & 1   & 253\\
			5 & 1   & 168\\
			6 & 1   & 1044\\
			7 & 1   & 697\\
			$\dots$ & $\dots$   & $\dots$\\
			\hline
		\end{tabular}
		&
		\begin{tabular}{|l|l|r|}
			\hline \multicolumn{1}{|c|}{\cname{CID}}   &
			\multicolumn{1}{c|}{\cname{Mo}} &  \multicolumn{1}{c|}{\cname{Dur}}
			\\ \hline
			1 & 3   & 480\\
			2 & 3   & 327\\
			3 & 3   & 805\\
			4 & 3   & 290\\
			5 & 3   & 121\\
			6 & 3   & 1130\\
			7 & 3   & 671\\
			$\dots$ & $\dots$   & $\dots$\\
			\hline
		\end{tabular}
	\end{tabular}
	
	\\
	\\
	\begin{tabular}{cc}
		\multicolumn{2}{c}{\tname{Plans}}\\
		\begin{tabular}{|l|l|c|}
			\hline
			\multicolumn{1}{|c|}{\cname{Plan}} & \multicolumn{1}{c|}{\cname{Mo}}&\multicolumn{1}{c|}{\cname{Price}} \\
			\hline
			Plan A & 1 & 0.4\\
			Family1 (F1)& 1 & 0.35\\
			Youth1 (Y1) & 1 & 0.3\\
			Veterans (V) & 1 & 0.25\\
			Small Business1 (SB1) & 1 & 0.1\\
			Small Business2 (SB2) & 1 & 0.1\\
			Enterprise (E) & 1 &  0.05\\
			$\dots$ & $\dots$ &$\dots$\\
			\hline
		\end{tabular}
		&
		\begin{tabular}{|l|l|c|}
			\hline
			\multicolumn{1}{|c|}{\cname{Plan}} & \multicolumn{1}{c|}{\cname{Mo}}&\multicolumn{1}{c|}{\cname{Price}} \\
			\hline
			A & 3 & 0.5\\
			F1& 3 & 0.35\\
			Y1 & 3 & 0.25\\
			V & 3 & 0.2\\
			SB1 & 3 & 0.1\\
			SB2 & 3 & 0.15\\
			E & 3 &  0.05\\
			$\dots$ & $\dots$ &$\dots$\\
			\hline
		\end{tabular}
	\end{tabular}
\end{tabular}	
}
\caption{Example database}\label{database}\label{Fi:database}
\end{figure*}

\vspace{-1ex}A significant challenge preventing the practical dissemination of such solutions is their overhead:
Essentially, we replace each query result---e.g., a numerical value---by a large provenance polynomial that encompasses the computation that took place, all tuples/cells that participated in it, as well as parameters that allow to control multiple scenarios of interest. 
Empirical evidence for the practical blowup of provenance expression was shown in \cite{selP}. 
A theoretical complexity analysis of the provenance size required for hypothetical reasoning for sum queries showed that exact provisioning requires the provenance size to be exponential in the number of hypotheticals~\cite{valICDT16}.



{\em The main contribution of the present paper is a novel framework for the reduction of provenance size}. The framework is based on the notion of {\em abstraction}; the main idea is that 
instead of assigning a distinct variable per cell, we can often group variables together, forming an abstract ``meta-variable". This naturally decreases the degree of freedom upon valuation, as we are forcing the variables in the group to be uniformly assigned. In many cases, this is consistent with the required hypothetical scenarios: for instance we may assume that a discount is uniformly applied to all months of a particular quarter. In return, as we demonstrate, grouping variables together can lead to a significant reduction in the provenance size, where distinct monomials become identical.

\vspace{-2pt}\begin{example}\label{ex:reducePolySize}
	Reconsider the case where the analyst wishes to examine the effect of changing the \ppm{} of the different plans in different months. The following polynomial captures the revenues  under this hypothetical scenario for zip code 10001 (after simplifications) for the database fragment given in Figure \ref{database}.
	\vspace{-2pt}\begin{align*}
	P={} &220.8\cdot p_1\cdot m_1+ 240\cdot p_1\cdot m_3+ 127.4\cdot f_1\cdot m_1+ \\&114.45\cdot f_1\cdot m_3+ 75.9\cdot y_1\cdot m_1+ 72.5\cdot y_1\cdot m_3+ \\&42\cdot v\cdot m_1+ 24.2\cdot v\cdot m_3
	\end{align*}
	\vspace{-2pt}$p_1$ is used to control the changes in the price of plan $A$, $f_1$ for plan $F1$, $y_1$ for $Y1$, and $v$ for the veterans plan. The variables $m_i$ are used to control the plan's price per month for different months. For instance, the monomial $220.8\cdot p_1\cdot m_1$ is the result of the multiplication of the total calls duration of the customer with ID 1 in January, with the price of plan $A$ on January, parameterized by the discount-by-plan variable $p_1$ and the discount-by-month variable $m_1$.
	  
	If the analyst knows that the prices are usually changed uniformly during each quarter, a natural abstraction in this example is to group the month variables into quarterly meta-variables. I.e., replacing $m_1, \ldots, m_3$ by $q_1$, $m_4, \ldots, m_6$ by $q_2$ etc. For instance, by replacing the variables $m_1$ and $m_3$ with $q_1$, instead of the monomials $220.8\cdot p_1\cdot m_1+ 240\cdot p_1\cdot m_3$ in the above polynomial, we obtain a single monomial $460.8\cdot p_1\cdot q_1$. The resulting polynomial in this case would be smaller: 

	\vspace{1pt}
	$P= 460.8\cdot p_1\cdot q_1+ 241.85\cdot f_1\cdot q_1+  148.4\cdot y_1\cdot q_1+   66.2\cdot v\cdot q_1\,.$	 	
\end{example}

\vspace{-2pt}Our framework   consists of the following   components: 

\mysubpara{Abstraction Trees}  We develop a simple formalism that allows to capture the possible valid abstractions, namely which variables may be grouped together while allowing for the desired hypothetical scenarios. The possible abstractions are captured by trees, examples of which appear in Figures 2 and 3 for abstractions based on calling plan and quarters respectively. An abstraction is then represented by a {\em cut} in the tree separating the root from all leaves. The idea is that for every node in the chosen cut, all of its descendant leaves are replaced by a single metavariable. There may be multiple abstraction trees, e.g., each corresponding to a different attribute of the input database; in this case an abstraction corresponds to a choice of a cut in each of the trees.

\mysubpara{Optimization Problem} Formally, the problem   we study is as follows: Given a provenance polynomial and abstraction trees over (subsets of) its variables, find a choice of abstraction that reduces the provenance size, while maximizing the ``expressiveness" of the abstraction; we next explain both measures. First, the provenance size is measured by the number of monomials in the resulting provenance polynomial. The number of monomials is indeed the dominant factor in the provenance size since the size of each monomial is bounded by a typically small constant, independent from the database size (it may depend on the query or the number of hypothetical scenarios). As for the expressiveness of the abstraction, we aim at maximizing the degrees of freedom 
left for hypothetical analysis; naturally, every grouping limits the possible scenarios in the sense that it forces multiple variables to be assigned the same value. Consequently, we measure the expressiveness of the abstraction by the number of distinct variable names it defines. In sum, our goal is to reduce the number of distinct {\em monomials} in the provenance, while maximizing the number of distinct {\em variables}.

\mysubpara{Complexity and Algorithms} We characterize the complexity of the problem, focusing on multiple sub-classes. A first natural case to consider is that of a single abstraction tree (even in this case, a monomial may still consists of multiple variables, but the abstraction may apply to at most one of them); note that there may still be exponentially many cuts in the tree. 
We show that in this case 
the optimization problem is solvable in polynomial time complexity by providing an efficient algorithm.
In contrast, the problem becomes intractable (NP-hard) if 
multiple abstraction trees are used. 
Nevertheless, we provide a greedy heuristic algorithm which we experimentally show to perform well in practice. 

\mysubpara{Implementation and Experimental Results} We
have  implemented our algorithms and thoroughly tested their performance, 
as well as their sensitivity to the multiple parameters of the problem: the database size; the shape of the provenance; the size, number and structure of the abstraction trees; the required bound on the resulting provenance size, etc. To our knowledge, a benchmark for provenance abstraction does not exist, and we have thus generated one, partly based on the TPC-H benchmark. Our results indicate good scalability and robustness of our solutions as well as the effectiveness of our greedy approximation. We also show, as discussed below, that our solution significantly outperforms a previously proposed summarization technique \cite{prox}. 

\mypara[\vspace{2pt}]{Novelty} Provenance summarization was studied in multiple contexts, e.g., for probability computation \cite{DBLP:journals/pvldb/ReS08} or explanations \cite{LeeNLG17}. The main novel aspects of the present work are in the problem setting including the use of abstraction trees that both restrict and guide the summarization, and in our novel algorithms and analysis that leverage the presence of such trees. The way that we use these trees to define our optimization problem is geared towards hypothetical reasoning, where one wishes to optimize the remaining degrees of freedom for hypotheticals. The approach in  \cite{prox} is more general: the authors assume that semantic constraints over the summarization (i.e., which variables of the provenance may be grouped together) are given to the algorithm using a black-box oracle. This generality in \cite{prox}, however, leads to the lack of theoretical guarantees on the results quality and the algorithm runtime. In contrast, 
in our setting where the compression algorithms depends on and directly use the abstraction trees, we are able to provide theoretical guarantees (PTIME complexity bound as well as a correctness guarantees for the single-tree case) and bounds (NP-hardness for the general case). In addition, we 
show in \S\ref{sec:exp} that the usefulness of directly using the abstraction trees is not confined to theory, but rather also leads to substantial practical gains over the black box approach of \cite{prox}.

\mypara[\vspace{2pt}]{\bf Offline vs. Online Compression} Our solutions take as input a pre-computed provenance expression and compresses it. This is in line with previous work \cite{prox,DBLP:journals/pvldb/ReS08} and with the anticipated use case: provenance may be generated only once, upon query evaluation (which could take place using strong computing and storage capabilities \cite{edbt18}), and then sent to multiple analysts for each to perform multiple hypothetical scenarios. In such case, a crucial point for provenance size reduction to take place is before it is sent to the analysts, to minimize both the communication cost, the cost of local storage at the analysts location and perhaps most importantly, the cost of 
applying each hypothetical scenario. 
%
Still, a natural question is whether provenance compression can be applied online, 
i.e., on-the-fly alongside with query evaluation. 
This would clearly be desirable, to avoid costly provenance generation at the outset.  

In \S\ref{sec:conc}, we discuss a preliminary plan for possible 
extension of our technique to the online setting, via the idea of sampling.  
The development of such an extension raises multiple intriguing challenges for future work, including how to obtain a representative sample for this purpose and how to use such samples to estimate the provenance size for the full computation. We briefly discuss these problems in \S\ref{sec:conc}.


\mypara[\vspace{2pt}]{Paper Organization} 
\S\ref{sec:abstaction} 
defines 
our provenance model and the optimization problem we attack.
\S\ref{sec:compression}  
provides a PTIME algorithm for a tractable fragment of the problem and a heuristic for the general case. 
\S\ref{sec:exp}
describes our implementation and presents our experimental results.
\S\ref{sec:related} discusses multiple lines of related work, and 
\S\ref{sec:conc} concludes.
For space reason, the the NP-hardness proof is given in the Appendix. 



\section{Problem Definition} 
\label{sec:abstaction}

This section introduces 
our provenance model and the formalisms we use  for capturing abstractions. 
We then precisely express the problems we attack and state our main results.

\subsection{Provenance Polynomials}\label{subsec:provpoly}

%

Given a set of indeterminates $X$ we use the standard notion of a polynomial over $X$ as  
a sum of monomials, where each of which is a product of indeterminates and/or rational numbers referred to as coefficients. An indeterminate may appear more than once in a monomial, in which case this number of occurrences is called its exponent.  Our model then applies to two different settings:
\begin{enumerate}
	\item When the tuples in the input database are annotated with variables and an SPJU query is executed. In this case the semiring model \cite{GKT-pods07} defines how output tuples are annotated with polynomials over this variables. The plus and times in this case are abstract semiring operation, and hypothetical scenarios correspond to assigning values to the variables. For example, these values may be Boolean, standing for hypothetical scenarios of existence/non-existence of tuples.
	
	\item We can also support queries with commutative aggregates (e.g. sum, min, max) where variables are placed/ combined with the values in certain cells for which the analyst wish to examine different valuations. In this case the plus operation in our polynomial corresponds to the aggregate function, and multiplication may correspond to its standard semantics over numbers. 
\end{enumerate}

Note that there are other, more expressive, provenance models in the literature, including support for nested aggregation and aggregation in conjunction with tuple annotations. Their support in our framework is left for future work. 
Supporting them in our framework is left for future work.

We assume that we are given a {\em multiset of such polynomials}, intuitively including all polynomials that appear in the provenance-aware result of query evaluation.

\mypara{Notations}
We denote the set of monomials in $P$ by $\Monsof{P}$.
The {\em size} of $P$, denoted by  $\numM{P}$, is defined as the number of its monomials, i.e.,  $\numM{P}=|\Monsof{P}|$. 
We denote the set of variables in $P$ by $\Varsof{P}$, i.e., $\Varsof{P}=\bigcup_{m \in \Monsof{P}}m$.
The {\em granularity} of $P$, denoted by $\numV{P}$, is defined as the number of its variables, i.e.,
$\numV{P}=|\Varsof{P}|$.
We lift these notations 
to (multi)sets of polynomials $\mathcal{P}$ in a point-wise manner, 
i.e., 
$\numM{\mathcal{P}}=\sum_{P\in\mathcal{P}}\numM{P}$,
$\Varsof{\mathcal{P}}=\bigcup_{p \in \mathcal{P}}\Monsof{P}$, and 
$\numV{\mathcal{P}}=|\Varsof{\mathcal{P}}|$.


\subsection{Abstraction 
Trees}\label{subsec:abstrees}

Our goal is to reduce the provenance polynomial size so that its number of monomials is below a given threshold, while supporting maximal granularity for hypothetical reasoning. To
this end, we allow the user to define \emph{abstraction trees} over the
variables, intuitively defining which grouping of variables, effectively forcing that the same value will be assigned to all variables in the group, ``makes sense" based on their semantics. For instance, the abstraction trees may be obtained by leveraging existing ontologies on the annotated data, in turn capturing the semantics of variables. The user may also manually construct/augment the trees based on the expected use of provenance, namely, form the trees so that variables that are expected to be assigned the same value, based on the user experience, will be located in proximity to each other in the tree.

Formally, an abstraction tree is a rooted labeled tree, where 
each node has a unique label (we thus use ``node" and ``label" interchangeably). We say that an abstraction tree $T$ is \emph{compatible} with a polynomial $P$ if
its leaves are labeled with (some of the) variables in $P$,  and
its internal nodes 
are labeled with 
\emph{meta-variables} -- these do not appear in $P$. Additionally, 
every monomial contains at most one node from $T$, i.e., $\forall m \in \Monsof{P}. |m \cap T| \leq 1 $. Different variables in a monomial of the provenance polynomial usually originate from different domains, e.g., different scenarios (as in our running example), or different relations (as in the model of \cite{GKT-pods07}). Thus, it is reasonable to assume they are abstracted using different abstraction trees.
(For simplicity, in the following we omit the distinction between variables and 
meta-variables.)

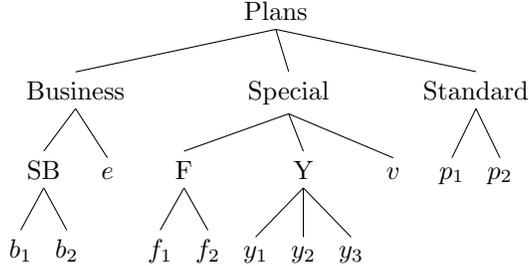
\begin{figure}[t]
	\centering
	\begin{tikzpicture}
	\Tree [.Plans
	[.Business [.SB $b_1$ $b_2$ ]  $e$ ]
	[.Special [.F $f_1$ $f_2$ ] [.Y $y_1$ $y_2$ $y_3$ ] $v$ ]
	[.Standard $p_1$ $p_2$ ]
	]
	\end{tikzpicture}
	\caption{An 
	abstraction tree of the 
	plans variables.}
	\label{fig:plansLattice}
\end{figure}

\begin{figure}[t]
	\centering
	\begin{tikzpicture}
	\tikzset{level 1+/.style={sibling distance=2\baselineskip}}
	\Tree
	[.Year [.$q_1$ $m_1$ $m_3$ ] [.$q_4$ $m_{10}$ $m_{12}$ ]]
	\node (x) at (0,-1) {$\ldots$} ;
	\node (x) at (-1.5,-2) {$\ldots$} ;
	\node (x) at (1.5,-2) {$\ldots$} ;
	\end{tikzpicture}
	\caption{An abstraction tree of the months variables.}
	\label{fig:monthLattice}
\end{figure}
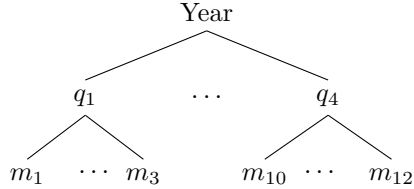

\vspace{-6pt}\begin{example}\label{ex:abstrees}
    In  Example~\ref{Ex:intro-example}, every variable  
	$m_i$ corresponds to the discount at the $i$'th month. 
	A natural abstraction over this set  is to use a set of variables per quarter $\{q_1,\ldots,q_4\}$, as shown in Example \ref{ex:reducePolySize}. Using the quarter's variables reduces the polynomials size on the one hand, but also reduces the provenance granularity; the corresponding tree is shown in Figure~\ref{fig:monthLattice}.
	
	Similarly,  
	the plans variables may be abstracted based on their 
	type, e.g., plans for small businesses $SB1$ and $SB2$, or further abstracting all Business 
	 plans, small and enterprises. Abstracting the families plans using a single variable $F$, and youth plans using the variable $Y$. We may also consider using a coarse abstraction that combines all special plans (families, youth and veterans) into a single variable. Figure \ref{fig:plansLattice} depicts the resulting abstraction tree. Note that every monomials of the polynomial 
	in Example \ref{ex:reducePolySize} contains at most one month variable and one plans variable. 
\end{example}


\subsection{Abstract Provenance Polynomials} 

Let $T$ be an abstraction 
tree and denote the set of the 
variables it contains, i.e., its nodes, by $\Varsof{T}$; the leaves, that correspond to variables occurring in the polynomial, are denoted by $\Leavesof{T}$. 
We say that $v' \leq_T v$
if $v, v'\in \Varsof{T}$ and $v'$ is a descendant of $v$ in $T$ (or $v' =
v$). 
A set of abstraction 
trees $\mathcal{T}$ is a \emph{valid abstraction forest}
if its trees are disjoint, i.e., if $T,T'\in\mathcal{T}$ and $\Varsof{T} \cap \Varsof{T'}\neq \emptyset$ then $T=T'$.
Given a set of abstraction 
trees $\mathcal{T}$, 
we denote by $\Varsof{\mathcal{T}}$  the set of all nodes of its trees,
i.e., $\Varsof{\mathcal{T}}=\bigcup_{T\in \mathcal{T}} \Varsof{T}$.
We say that $v' \leq_{\mathcal{T}} v$ if there exists an abstraction 
tree $T\in\mathcal{T}~$ such that 
$v' \leq_T v$; we omit $\mathcal{T}$ if it is clear from context. 
$\mathcal{T}$ is \emph{compatible} with a polynomial $P$ if all of its trees are; $\mathcal{T}$ is \emph{compatible} with a set of polynomials $\mathcal{P}$ if it is compatible with every polynomial in it.

We next define the notion of a \emph{valid variable set} (VVS for short), capturing a choice of abstraction using the trees.
 \begin{definition}
    Let $\mathcal{T}$ be a \emph{valid abstraction forest}.
	A \emph{valid variable set (VVS)} $S$ is a subset of $\Varsof{\mathcal{T}}$ such that
	\begin{enumerate}
		\item $\forall v\in \Leavesof{T}~ \exists v'\in S$ such that $v \leq_{\mathcal{T}} v'$ and 
		\item if $v,v'\in S$ and $v \leq_{\mathcal{T}} v'$ then $v = v'$.
	\end{enumerate}
\end{definition}


\vspace{-2pt}Technically, a valid variable set is a cut in the tree, separating the root from the leaves. 
which dictates a choice, for each leaf $v$, of a single ancestor $v'$in the tree, chosen to be its abstraction (it may be that $v=v'$, 
in which case $v$  
is not to abstracted). Intuitively, such choice means that for the subsequent hypothetical reasoning scenarios, all variables below each chosen node must be assigned the same value. 

\vspace{-2pt}\begin{example}\label{ex:validVarSet}
	Consider the abstraction tree presented in Figure \ref{fig:plansLattice}.
	The following sets 	are all valid variable sets:
	
	\noindent\hspace{-2ex}$
	\begin{array}{l@{}l}
	\begin{array}{l}
	S_1 = \{Business, Special, Standard\}\\ 
	S_2 = \{SB, e, f_1,f_2, Y, v, Standard\}\\ 
	S_3 = \{b_1, b_2, e, Special, Standard\}\\ 
	\end{array}
	&
	\begin{array}{l}
	S_4 = \{SB, e, F, Y, v, p_1, p_2\}\\ 
	S_5 = \{Plans\}\\
	\\
	\end{array}
	\end{array}
	$
\end{example}

\vspace{-2pt}Given a valid abstraction forest $\mathcal{T}$, which is compatible with 
a polynomial $P$ and a valid variable set $S$ of $\mathcal{T}$, 
we denote by $\subV{P}{S}$ the polynomial obtained from $P$ by substituting each variable
$v$ in $P$ by the (necessarily unique) variable $v'\in S$ such that $v \leq v'$, if such a variable $v'$ exists. (Recall that some variables of $P$ may not appear in the abstraction forest, in which case they stay intact. Also if $v \in S$ then $v'=v$). 
We lift the definition of $\subV{P}{S}$ to (multi)sets of polynomial $\mathcal{P}$ in a point-wise manner, i.e., $\subV{\mathcal{P}}{S}=\{\subV{P}{S} \mid P \in \mathcal{P}\}$.

Each choice of a valid variable set $S$ may entail a  ``loss" in terms of the granularity of hypothetical reasoning, in exchange for a reduction in the size of the polynomial. We quantify granularity through the number of distinct variables ($\numVsubV{P}{S}$) and size through the number of monomials ($\numMsubV{P}{S}$).

\begin{example}\label{ex:comprssionBK}
	Consider the polynomial $P$ for the revenues shown in Example \ref{ex:reducePolySize}. 
	Example \ref{ex:validVarSet} discusses 
	multiple ways to compress it, 
	each may result in
	different granularities and sizes. E.g., using $S_1$ we get that 
	$\numVsubV{P}{S_1} = 4$ and $\numMsubV{P}{S_1} = 4$, 
		whereas using $S_5$ we get that $\numVsubV{P}{S_5} = 3$ and $\numMsubV{P}{S_5} = 2$.
\end{example}

\subsection{Problems}
To formulate the problems we attack, 
we first define the notions
of \emph{precise}, \emph{adequate}, and \emph{optimal} abstractions, which, intuitively,  
state when a chosen abstraction (VVS) 
is powerful enough to reduce the size of the provenance under a given bound while maintaining sufficient granularity.

\begin{definition}[Precise, Adequate, and Optimal Abstractions]\label{def:adeopt}
Let $\mathcal{T}$ be a valid abstraction forest  compatible with a given set of polynomials $\mathcal{P}$.
Let $S$ be a valid variable set  of $\mathcal{T}$,
and $B\in \{1..\numM{\mathcal{P}}\}$ and  $K\in\{1..\numV{\mathcal{P}}\}$ 
desired bounds  on the size and granularity, respectively, of the set of polynomials 
resulting from abstracting $\mathcal{P}$ according to $S$. 
\begin{itemize}
\item
$S$ is \emph{precise} for $B$ and $K$ if $\numMsubV{\mathcal{P}}{S} = B$ and $\numVsubV{\mathcal{P}}{S}=K$. 
\item 
$S$ is \emph{adequate} for $B$ if $\numMsubV{\mathcal{P}}{S} \leq B$.
\item 
$S$ is \emph{optimal} for $B$ if it is adequate for $B$   
and 
if a valid variable set $S'$  is adequate for $B$ then
$\numVsubV{\mathcal{P}}{S'} \leq \numVsubV{\mathcal{P}}{S}$.

\end{itemize}
\end{definition}

Note that for a given set of polynomials $\mathcal{P}$, abstraction forest $\mathcal{T}$, and a bound $B$, the existence of a VVS $S$ that is adequate for $B$ is not guaranteed.

\begin{example}
	Consider the set $\mathcal{P}$ with the single polynomial $P$ from Example \ref{ex:reducePolySize} and the abstraction forest $\mathcal{T}$ with a single tree $T$ shown in Figure \ref{fig:monthLattice}. The maximal  compression of $P$ using $T$ results in a polynomial of size 4, thus, for the bound $B = 3$, there is no VVS $S$ adequate for $B$.
\end{example}

Formally, the optimization problem that we study 
is that of finding an optimal abstraction for a given bound~$B$.

\begin{definition}[Optimization problem]\label{def:optProblem}
Given a set of polynomials $\mathcal{P}$  and a valid abstraction forest $\mathcal{T}$ which is compatible with it, and a bound $B\in \{1 \ldots \numM{\mathcal{P}}\}$, find an \emph{optimal} valid variable set $S$ for $B$.
\end{definition}

To formally characterize the complexity of the optimization problem, we further need to define a corresponding decision problem. We define it as the problem of determining existence of a precise abstraction for
a given desired size  $B$ and granularity  $K$ of the reduced polynomial.
Note that as both $B$ and $K$ are polynomial in the size of the provenance,
 the optimization problem can be solved via a polynomial number of invocations of a solution to the decision problem.


\begin{definition}[Decision problem]\label{def:decisionProblem}
Given a set of polynomials $\mathcal{P}$, a valid abstraction forest $\mathcal{T}$ which is compatible with it, and desired size $B\in \{1 \ldots \numM{\mathcal{P}} \}$ and  granularity $K\in\{1 \ldots \numV{\mathcal{P}} \}$ of the set of polynomials
resulting from abstracting $\mathcal{P}$ according to $S$. 
Determine if there exists a precise valid variable set for $B$ and $K$.
\end{definition}


\subsection{Main Results} \label{subsec:main_res}
The main results of this paper can be summarized as follows:
\begin{compactenum}
\item 
We characterize in detail the complexity of the problem. We show that the problem is NP-Hard in general (\S\ref{sec:hardness-app}).
\begin{proposition}\label{prop:main}
	The decision problem is NP-hard
\end{proposition}

\item 
We study an important restricted case, where we are given a single abstraction tree. Recall that each abstraction tree allows further compression; we experimentally show that even with a single abstraction tree it is possible to reduce the polynomial size significantly. For this restricted case, we show its tractability by providing an efficient algorithm (\S\ref{sec:alg:single-tree}).  
\begin{proposition}\label{prop:ptime}
	The optimization problem is in PTIME if the abstraction forest contains exactly one tree. 
\end{proposition}


\item
We provide a simple yet effective greedy algorithm for abstracting polynomials using multiple trees 
(\S\ref{sec:greedy}).

\end{compactenum}

\ignore{
In the following sections we characterize in detail the complexity of the problem. We show that the problem is NP-Hard in general (see Appendix~\ref{sec:hardness-app}).
\begin{proposition}\label{prop:main}
	The decision problem is NP-hard
\end{proposition}

However, an important restricted case is where we are given a single abstraction tree. Recall that each abstraction tree allows further compression; we experimentally show that even with a single abstraction tree it is possible to reduce the polynomial size significantly. For this resticted case, we show its tractability by providing an efficient algorithm (\S\ref{sec:alg:single-tree}).  
\begin{proposition}\label{prop:ptime}
	The optimization problem is in PTIME if the abstraction forest contains exactly one tree. 
\end{proposition}


We then provide a simple yet effective greedy algorithm for the multiple trees case (Section \ref{sec:greedy}).
}

\section{Algorithms}
\label{sec:compression}
 
In this section, we show that the optimization  problem is tractable when we are given a single abstraction tree, and present a greedy algorithm for the general case.

\subsection{Single Abstraction Tree}\label{sec:alg:single-tree}

\Cref{algo:varSelection}  allows to compress a (multi)set of polynomials $\mathcal{P}$ using a single abstraction tree $T$ in an optimal manner in polynomial time.
In fact, given a bound $B=1..\numM{\mathcal{P}}$ on the desired size of the compressed provenance expression, 
the algorithm computes for each node $v \in T$ and every  $i=1..|B|$ an optimal valid variable set $S^i_v$ 
such that
(a) $S^i_v$ is comprised of metavariables coming from the subtree rooted at $v$ 
and
(b) abstracting  $\mathcal{P}$ according to $S^i_v$ results in polynomial containing $i$  monomials.
($S^i_v$ is set to $\bot$ if no such VVS exists.)

Roughly speaking, \Cref{algo:varSelection} works as follows:
It first calculates the optimal sets $S^i_v$ for the leaves $v\in\Leavesof{T}$.
It then  traverses the tree in a bottom up manner, computing the 
required optimal sets $S^i_v$ for every internal node $v$ using the optimal sets computed for its children or by setting  $S^i_v=\{v\}$. 

The key insight is that for any two nodes $v$ and $v'$
and any $i,j=1..B$, 
if neither $v  \leq v'$ nor $v'  \leq v$ holds then  $S=S^i_v \cup S^j_{v'}$ is a VVS
and that $S^i_v \cap S^j_{v'}=\emptyset$.
Furthermore, recall that we require that any monomial in $\mathcal{P}$ may contain at most one variable in $T$. 
Thus, the disjointness of $S^i_v$ and $S^j_{v'}$ ensures that 
they compress $\mathcal{P}$ by unifying disjoint sets of monomials.
Hence, $\numMsubV{\mathcal{P}}{S}$ (the size of   the provenance expression obtained by abstracting  $\mathcal{P}$ according to $S$) can be determined from the sizes of 
$\numMsubV{\mathcal{P}}{S^i_v}$ and $\numMsubV{\mathcal{P}}{S^j_{v'}}$;
namely, 
$\numMsubV{\mathcal{P}}{S} =
 \numM{\mathcal{P}} - 
 ( 
  (\numM{\mathcal{P}} - \numMsubV{\mathcal{P}}{S^i_v}) 
  + 
  (\numM{\mathcal{P}} - \numMsubV{\mathcal{P}}{S^j_{v'}})
 ) $.
For the same reasons,  we get that 
$\numVsubV{\mathcal{P}}{S} =
 \numV{\mathcal{P}} - 
 ( 
  (\numV{\mathcal{P}} - \numVsubV{\mathcal{P}}{S^i_v}) 
  + 
  (\numV{\mathcal{P}} - \numVsubV{\mathcal{P}}{S^j_{v'}})
 ) $.
Based on these observations, \Cref{algo:varSelection} uses a dynamic programming
technique to compute the optimal VVSs of internal nodes, as explained below.

\mypara[\vspace{2pt}]{Notations}
For convince, we introduce two derived properties which correspond to the \emph{difference}  in the size and granularity of the original polynomial and the compressed one: 
Given a (multi)set of polynomials $\mathcal{P}$ and valid variable set $S$, 
the induced \emph{monomials loss},
denoted by
$ML_{\mathcal{P}}(S)$, is the difference between the number of monomials in $\mathcal{P}$ and in $\subV{\mathcal{P}}{S}$, i.e., $ML_{\mathcal{P}}(S) = \numM{\mathcal{P}} - \numMsubV{\mathcal{P}}{S}$.
Similarly, the induced \emph{variable loss},
denoted by $VL_{\mathcal{P}}(S)$, is the difference between the number of
distinct variables in $\mathcal{P}$ and in $\numV{\subV{\mathcal{P}}{S}}$,
i.e.,  $VL_{\mathcal{P}}(S) = \numV{\mathcal{P}} - \numVsubV{\mathcal{P}}{S}$. 
(We omit the $\mathcal{P}$ subscript when it is clear from context.)
For instance, 
in Example \ref{ex:comprssionBK}, 
$ML(S_1)=4$ and $ML(S_5) = 6$, while $VL(S_1)=2$ and $VL(S_5)=3$.
Also note that for any VSSs $S'$ and $S''$, if  $S' \cup S''$ is also a VVS then
$ML(S' \cup S'') =ML(S') \cup ML(S'')$ and 
$VL(S' \cup S'') =VL(S') \cup VL(S'')$.

\SetInd{1.3ex}{1.3ex}
\begin{algorithm}
	\SetKwInOut{Input}{input}\SetKwInOut{Output}{output}
	\LinesNumbered
	\Input{A set of polynomials $\mathcal{P}$, a variable abstraction tree $T$ and bound $B$}
	\Output{Optimal valid variables set $S$} \BlankLine
	\SetKwFunction{algo}{optimalVVS}\SetKwFunction{proc}{computeArray}
	\SetKwFunction{ML}{ML}\SetKwFunction{VL}{VL}
	\SetKwProg{myalg}{Algorithm}{}{}
	\nonl \myalg{\algo{$\mathcal{P}, T, B$}}{
		$k \gets \numM{\mathcal{P}}-B$\;\label{line:k}
		\ForEach{leaf $l$ in $T$}
		{\label{line:forEachLeafStart}
			$A_{l}[0] \gets 0$\;
			\For{$i=1$ to $k$}
			{
				$A_{l}[i] \gets \bot$\;\label{line:forEachLeafEnd}
			}
		}
		\nonl Traverse $T$ in a bottom-up fashion\;
		\ForEach{(non-leaf) node $v$ in $T$}
		{\label{line:forEachNodeStart}
			\nonl Let $v_1,\ldots,v_m$ be the immediate children of $v$\;
			$A_{v} \gets$\proc($A{v_1},\ldots,A{v_m})$\;\label{line:computeArray}
			\If{\ML($v$) $< k$}{
				$A_{v}[$\ML$(v)] \gets \min(A_{v}[$\ML$(v)], $\VL($v$))\label{line:computeNodeML1}
			}
			\Else{
				$A_{v}[k] \gets \min(A_{v}[k], $\VL$(v))$\label{line:computeNodeML2}
			}
		}
		\Return $A_{r}[k]$ where $r$ is the root of $T$\; \label{line:returnVVS}
	}
	\setcounter{AlgoLine}{0}
	\SetKwProg{myproc}{Procedure}{}{}
	\nonl \myproc{\proc{$A{v_1},\ldots,A{v_m}$}}{
		\For{$j=0$ to $k$}
		{\label{line:initStart}
			$\tau[1,j]\gets A_1[j]$\; \label{line:init1}
			\For{$i=2$ to $m$}
			{
				$\tau[i,j]\gets \bot$\; \label{line:init2}
			}
		}
		\For{$i=2$ to $m$}
		{\label{line:forChildStart}
			\For{$j=0$ to $k$}
			{\label{line:forEntryStart}
				\For{$s=0$ to $j$}
				{
					\If{$\tau[i-1,s] \neq \bot$ and $A_i[j-s] \neq \bot$}
					{
						\If{$\tau[i,j] = \bot$}
						{
							$\tau[i,j] \gets \tau[i-1,s]+A_i[j-s]$\;
						}
						\ElseIf{$\tau[i-1,s]+A_i[j-s]<\tau[i,j]$}
						{
							$\tau[i,j] \gets \tau[i-1,s]+A_i[j-s]$\;
						}
					}
					
				}
			}\label{line:forChildEnd}
		}
		\For{$j=0$ to $k$}
		{\label{line:constructRes}
			$A_v[j] \gets T[m,j]$\;
		}
		\Return $A_v$\;  \label{line:return}
		
	}
	\caption{Optimal Valid Variables Selection}\label{algo:varSelection}
\end{algorithm} 

\mypara[] {\Cref{algo:varSelection}} 
The algorithm 
keeps for every node $v\in T$ an  array $A_v$  which encodes an optimal VVS for every possible monomial loss ranging from $0$ to $k=\numM{\mathcal{P}}-B$. 
Every array has $k+1$ entries. 
The $i$'th  entry, for $i=0..k$, records the set $S^i_v$ (see above) and is set to $\bot$ if no such VVS exists. 
If $S^{k}_v$ does not exists then $A_v[k]$   
records the minimal  $VL(S_v)$ such that $ML(S_v) \geq k$   (if no such $S_v$ exists then $A_v[k] =\bot$).

The algorithm computes $A_v$ is a bottom-up fashion, and keeps pointers to the tree nodes used to obtain the minimal variable loss  for each entry. The returned value is the VVS encoded by the $\numM{\mathcal{P}}-B$ entry of the root array.

More specifically, the algorithm initializes the  $A_l[0]$ 
for every leaf node $l \in \Leavesof{T}$ to zero and sets all other entries to $\bot$.
(lines \ref{line:forEachLeafStart}--\ref{line:forEachLeafEnd}). 
Note that indeed, a VVS which does not contain any metavariable cannot compress any polynomial.
 
Now, let $v$ be a node with children $v_1,\ldots,v_m$ in the abstraction
tree. Given the arrays $A_{v_j}$ for $1 \leq j \leq m$, $A_{v}$ can be computed. 
(Recall that a VVS $S_v$ can be either
a union of valid variable sets $S_{v_j}$ for $1 \leq j \leq m$ or
$S_v = \{v\}$.) If $S_v = \bigcup_{j=1}^m S_{v_j}$ then $VL(S_v) =
\sum_{j=1}^m VL(S_{v_j})$ and $ML(S_v) = \sum_{j=1}^m ML(S_{v_j})$.
In the case where $S_v = \{v\}$, $VL(S_v)$ equals the
number of leaves in the sub-tree rooted at $v$ minus 1, and $ML(S_v)$
can be easily computed by abstracting $\mathcal{P}$ using the VVS $\{v\}$.

For each node $v$, the algorithm uses the procedure \texttt{compute\-Array()} to compute the optimal possible valid variables sets for each $0 \leq i \leq k$ using $v$'s children VVS (lines \ref{line:forEachNodeStart}--\ref{line:computeArray}). Then using the procedures \texttt{ML($v$)} and \texttt{VL($v$)} that computes $ML(S_v)$ and $VL(S_v)$ for the case where $S_v = \{v\}$ respectively, and updates $A_v$ accordingly (lines \ref{line:computeNodeML1}--\ref{line:computeNodeML2}).
\texttt{compute\-Array()} determines the $i$'th entry in $A_v$ using  dynamic programming, where the outer loop (line \ref{line:forChildStart}) computes the the optimal VVSs considering only VVSs containing meta variables from the
subtrees rooted at $A_{v_1},\ldots,A_{v_i}$.
Finally, the optimal VVS $S$ with $ML(S)
\geq \numM{\mathcal{P}}-B$, is the VVS used to obtain the value in $A_{r}[k]$ where $r$
is the root of $T$ and $k = \numM{\mathcal{P}}-B$ (line \ref{line:returnVVS}). 
Constructing the valid variable set is then done by pointer chasing.

\begin{example}\label{ex:optAlgoRun}
The following polynomials are resulting from the query of our running example (Example \ref{Ex:intro-example}), the database fragment given in Figure \ref{database}, and parameterization based on the leaves of the abstraction trees in Figures \ref{fig:plansLattice} and \ref{fig:monthLattice}. The variables $p_1$, $f_1$, $y_1$, $v$ $b_1$, $b_2$ and $e$ are used to parameterize the plans prices based on the plan's type, and $m_1$ and $m_3$ are used control the planes prices per month for different months.
	\begin{align*}
		P_1={} &220.8\cdot p_1\cdot m_1+ 240\cdot p_1\cdot m_3+ 127.4\cdot f_1\cdot m_1+ \\& 114.45\cdot f_1\cdot m_3+  75.9\cdot y_1\cdot m_1+ 72.5\cdot y_1\cdot m_3+ \\&42\cdot v\cdot m_1+ 24.2\cdot v\cdot m_3 \\
		P_2={} & 77.9\cdot b_1\cdot m_1+ 80.5\cdot b_1\cdot m_3+ 52.2\cdot e\cdot m_1+ \\&56.5\cdot e\cdot m_3+ 69.7\cdot b_2\cdot m_1+ 100.65\cdot b_2\cdot m_3
		\end{align*}
		Given the polynomials set $\{P_1, P_2\}$, the plans abstraction tree\footnote{We assume that all the leaves of the given tree appears in the polynomials. Otherwise, 
		we ``clean'' the tree by removing redundant nodes.} shown in Figure \ref{fig:plansLattice} and the bound $B=9$, The algorithm first computes 	$k = \numM{\mathcal{P}}-B = \numM{P_1} + \numM{P_2} - B = 8 + 6 - 9 = 5$. After initializing the leaves arrays, the algorithm traverses the tree in a bottom-up fashion and computes the array for the inner nodes. In this example, in the array of $SB$ $A_{SB}[2]= 1$ because the abstraction that use the variable $SB$ instead of $b_1$ and $b_2$ lose one variable (the result of replacing two variables with a single one), and reduce the provenance by two monomials ($147.6\cdot SB\cdot m_1+ 181.15\cdot SB\cdot m_3$ instead of $77.9\cdot b_1\cdot m_1+ 80.5\cdot b_1\cdot m_3+69.7\cdot b_2\cdot m_1+ 100.65\cdot b_2\cdot m_3$ in $P_2$). Similarly, the in the array of $Sp$ (we use $Std$, $B$ and $Sp$ as shorthand for $Standard$, $Business$ and $Special$ respectively), $A_{Sp}[4]= 2$. For node $B$, we obtain the array $A_B = [0, \bot, 1, \bot, 2, \bot]$. $A_B[2] = 1$ is obtained by using $B$'s children arrays, since $A_{SB}[2] = 1$ and $A_{e}[0] = 0$, and $A_B[4] = 2$ by the abstraction that uses $B$. 
		Finally, $A_{Plans}$ is computed using $A_B$, $A_{Sp}$ and $A_{St}$, and the resulting array is $A_{Plans} = [0, \bot, 1, \bot, 2, 3]$. Here, the value $A_{Plans}[5] = 3$ is obtained using $A_B[2] = 1$ and $A_B[4] = 2$. The corresponding valid variable set is $\{SB, Sp,e,p_1\}$,  the monomial loss using this abstraction is 6 and the variable loss is~3.
	
\end{example}

\begin{proposition}[Complexity]
\label{prop:comp}
Given a set of polynomials $\mathcal{P}$, variable abstraction tree $T$, and a bound $B$, the complexity of Algorithm \ref{algo:varSelection} is $O(n\cdot w \cdot k^2\cdot \numM{\mathcal{P}})$ where $n$ is the number of nodes in $T$, $w$ is the width of $T$, and $k = \numM{\mathcal{P}} - B$.
\end{proposition}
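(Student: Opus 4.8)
The plan is to bound the running time phase by phase and then sum the contributions over the whole tree, bounding each per-node quantity by the parameters $w$, $k$, and $\numM{\mathcal{P}}$. First I would dispose of the leaf-initialization loop (lines \ref{line:forEachLeafStart}--\ref{line:forEachLeafEnd}): it touches $k+1$ entries for each of the $O(n)$ leaves, contributing $O(n\cdot k)$, which is dominated by the later terms. The real work happens at the internal nodes, so the core of the argument is to bound the cost incurred at a single internal node $v$ and then multiply by the number of nodes $n$.

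For a fixed internal node $v$ with children $v_1,\dots,v_m$, I would analyze the two sources of work separately. The first is the call to \texttt{computeArray} (line \ref{line:computeArray}). Its initialization (lines \ref{line:initStart}--\ref{line:init2}) costs $O(m\cdot k)$, and its main body is the triple-nested loop over $i=2..m$, $j=0..k$, and $s=0..j$ (lines \ref{line:forChildStart}--\ref{line:forChildEnd}); each innermost iteration performs only a comparison and an addition of two already-computed table entries, hence $O(1)$ work. The number of innermost iterations is therefore $\sum_{i=2}^{m}\sum_{j=0}^{k}(j+1)=O(m\cdot k^2)$, so \texttt{computeArray} runs in $O(m\cdot k^2)$, and bounding the branching $m$ by the width $w$ of $T$ gives $O(w\cdot k^2)$ per node. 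The key point that makes each cell update $O(1)$ — and the only substantive fact I must invoke here — is the additivity established in the algorithm's description: since any monomial of $\mathcal{P}$ contains at most one variable of $T$, the sets $S^i_{v_a}$ and $S^j_{v_b}$ of sibling subtrees are disjoint and compress disjoint monomial sets, so $ML(S'\cup S'')=ML(S')+ML(S'')$ and likewise for $VL$; this is exactly what lets the DP combine children by summing table indices rather than recomputing anything over $\mathcal{P}$.

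The second source of work at $v$ is the singleton case $S_v=\{v\}$ (lines \ref{line:computeNodeML1}--\ref{line:computeNodeML2}), which requires $ML(\{v\})$ and $VL(\{v\})$. Here $VL(\{v\})$ is just the number of leaves under $v$ minus one, which can be maintained within the bottom-up traversal at no asymptotic cost. Computing $ML(\{v\})=\numM{\mathcal{P}}-\numMsubV{\mathcal{P}}{\{v\}}$, however, requires actually abstracting $\mathcal{P}$ under $\{v\}$ and counting how many monomials coincide; scanning all monomials and grouping them (each monomial having at most a constant number of variables) costs $O(\numM{\mathcal{P}})$. This is the single place where the polynomial size enters the per-node bound. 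Combining the two sources, the work at each node is at most $O(w\cdot k^2)$ together with $O(\numM{\mathcal{P}})$; bounding this product-wise by $O(w\cdot k^2\cdot \numM{\mathcal{P}})$ and summing over the $n$ nodes of $T$ yields the claimed $O(n\cdot w\cdot k^2\cdot \numM{\mathcal{P}})$.

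The main obstacle — and the part I would state most carefully — is the justification that the inner DP cell is genuinely $O(1)$ rather than hiding a re-scan of $\mathcal{P}$; this hinges entirely on the disjointness/additivity property above, which in turn relies on the compatibility assumption that each monomial meets $T$ in at most one variable. I would also remark that the stated bound is a deliberately loose product form: a tighter accounting, using $\sum_v m_v = n-1$ for the \texttt{computeArray} contribution and charging the $ML(\{v\})$ scans additively, gives $O(n\cdot k^2 + n\cdot\numM{\mathcal{P}})$, so the proposition's bound is a valid (if coarse) over-estimate obtained by bounding each factor independently.
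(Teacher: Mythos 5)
Your proof is correct and matches the analysis implicit in the paper, which states \Cref{prop:comp} without an explicit proof: you bound the per-node cost by $O(w\cdot k^2)$ for the \texttt{computeArray} dynamic program (each cell update being $O(1)$ precisely because compatibility---each monomial meeting $T$ in at most one variable---makes $ML$ and $VL$ additive over disjoint sibling VVSs) plus $O(\numM{\mathcal{P}})$ for the one scan needed to compute $ML(\{v\})$, and then sum over the $n$ nodes. Your closing observation is also right: the stated bound is a coarse product-form over-estimate, and the same accounting (using $\sum_v m_v = n-1$ and charging the $ML(\{v\})$ scans additively) yields the tighter $O(n\cdot k^2 + n\cdot\numM{\mathcal{P}})$, which in particular implies the claimed $O(n\cdot w\cdot k^2\cdot\numM{\mathcal{P}})$.
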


Note that $k$ is bounded by the size of the given provenance polynomial, so the algorithm is in PTIME.

\subsection{Multiple Abstraction Trees} \label{sec:greedy}

Compressing a (multi)set of polynomials $\mathcal{P}$ polynomial using  
an arbitrary abstraction  forest $\mathcal{T}$ 
 in an optimal manner is intractable (see \Cref{prop:ptime}).
Thus, we provide in Algorithm \ref{algo:greedy}  a greedy heuristic solution. 

Roughly speaking, Algorithm \ref{algo:greedy} works as follows:
It maintains a VVS $S$ and keeps updating it and shrinking the provenance size until $S$ becomes adequate for $B$.
The set is initialized to contain all the leaves in the forest.
Then, the algorithm iteratively replaces a set of sibling nodes in $S$ with their parents. To make this process more efficient, 
the algorithm 
maintains a set $C$ of \emph{candidate} parents which are not in $S$, but all their children are. The greedy aspect of the algorithm is that the candidate parent it choses is the one which entails the minimal variable loss.
%

In more detail, given 
bound $B$,  Algorithm \ref{algo:greedy} first initializes the current VVS $S$ and candidate nodes to $\emptyset$ (line \ref{line:greedyInit1}) and $k$ to $\numM{\mathcal{P}} - B$ (line \ref{line:greedyInit3}). Then, in lines \ref{line:greedyAddleaves1}--\ref{line:greedyAddleaves2} the algorithm adds the leaves of all the trees in $\mathcal{T}$ to $S$ and updates the candidates list accordingly in lines \ref{line:greedyAddToC1}--\ref{line:greedyAddToC2}.

\SetInd{1.3ex}{1.3ex}
\begin{algorithm}
	\SetKwInOut{Input}{input}\SetKwInOut{Output}{output}
	\SetKwFunction{ML}{ML}\SetKwFunction{VL}{VL}
	\SetKwFunction{allRoots}{allRoots}
	\LinesNumbered
	\Input{A set of polynomials $\mathcal{P}$, a set of variable abstraction forest $\mathcal{T}$, and bound $B$}
	\Output{Valid variables set $S$} \BlankLine
	{
		$S \gets \emptyset$; $C \gets \emptyset$\;\label{line:greedyInit1}
		$k \gets \numM{\mathcal{P}} - B$\; \label{line:greedyInit3}
		
		\ForEach{tree $T$ in $\mathcal{T}$}
		{\label{line:greedyAddleaves1}
			\ForEach{leaf $l$ in $T$}
			{
				$S \gets S\cup \{l\}$ \label{line:greedyAddleaves2}
			}
		}
		\ForEach{tree $T$ in $\mathcal{T}$}
		{\label{line:greedyAddToC1}
			\ForEach{node $n$ in $T$}
			{
				\If {$n.children \subseteq S$}
				{
					$C \gets C\cup \{n\}$ \label{line:greedyAddToC2}
				}
			}
		}
		\While{ $\ML(S) < k$ and $C \neq \emptyset$}
		{\label{line:greedyWhileLoopStart}
			\nonl let $c \in C$ be the node with minimal $\VL((S\setminus c.chilren)\cup  \{c\})$\;
			$C \gets C\setminus \{c\}$;$S \gets S\cup \{c\}$\; \label{line:greedyUpdateSets}
			$S \gets S\setminus \{c.children\}$\; \label{line:greedyUpdateS}
			\If {$(c.parent).children \subseteq S$}
			{ \label{line:greedyUpdateC}
				$C \gets C\cup \{c.parent\}$\; \label{line:greedyUpdateC2}
			}
			 \label{line:greedyWhileLoopEnd}
			
		}

		\Return $S$\; \label{line:greedyReturn}
	}
	\caption{Greedy Valid Variables Selection}\label{algo:greedy}
\end{algorithm} 

Then, while the monomial loss of $S$ is less than $k$ and the candidate set is not empty (lines \ref{line:greedyWhileLoopStart} -- \ref{line:greedyWhileLoopEnd}), the algorithm selects the candidate $c$ such that adding $c$ to $S$ results in minimal variables loss with respect to the polynomials obtained by the abstraction $S$ (ties are broken arbitrarily), adds it to $S$ and removes it from $C$ (line \ref{line:greedyUpdateSets}). We then remove $c$'s children from $S$ (line \ref{line:greedyUpdateS}) and if the parent of $c$ is now a candidate (line \ref{line:greedyUpdateC}), add it to $C$. 

\mypara{Complexity} Given a set of polynomials $\mathcal{P}$, a variable abstraction forest $\mathcal{T}$, and a bound $B$, the complexity of Algorithm
\ref{algo:greedy} is $O(n\cdot \numM{\mathcal{P}})$ where $n$ is the  number of nodes in $\mathcal{T}$. 
We experimentally study the preference of the algorithm and the quality of its results. Naturally, the results accuracy depend on the abstraction tree structure, however, we show that in many cases we obtain high quality results (see \S\ref{sec:exp}).

\begin{example}
Consider again the polynomials $\mathcal{P} = \{P_1, P_2\}$ from Example \ref{ex:optAlgoRun}. Given $\mathcal{P}$, the abstraction trees (after removing redundant nodes) $\mathcal{T} = \{Plans, Year\}$  shown in Figures \ref{fig:plansLattice} and \ref{fig:monthLattice}, and a bound $B = 4$, the greedy algorithm first compute $k \gets \numM{\mathcal{P}} - B = 14 - 4 = 10$, initializes $S$ with the leaves of all trees, and the set of candidates to be $C = \{SB,Sp, q_1\}$. $VL(Sp) = 2$, and both $SB$ and $q_1$ have the same $VL$ of 1, but the $ML$ of the VVS obtained from $S$ by adding $q_1$ is 7, whereas the variable loss obtained from $SB$ is 2. The algorithm selects $q_1$, which is then removed from $C$ and added to $S$; $m_1$ and $m_3$ are removed from $S$. Since the monomial loss of $S$ is 7 (less than $k = 10$) it continues to the next iteration, where the candidate $SB$ is selected. When $SB$ is added to $S$, the node $B$ is added to the candidates list $C$ since all of its children are in $S$. The $VL$ of $B$ is 1, since the polynomials obtained using $S$ contains the variables $SB$ and $e$. The monomial loss of the VVS obtained by adding $SB$ is 8. Next, $B$ is added to $S$, $e$ and $SB$ are removed, and the $ML$ of $S$ is 9. Finally, the node $Sp$ is added to $S$, and $f_1$, $y_1$ and $v$ are removed resulting in a VVS $S$ with $ML = 11 > k$ and the algorithm terminates with $VL$ of 5. Note that in this example the VVS $\{q_1, Sp, SB,e,p_1\}$ is the optimal VVS with $ML = 10$ and $VL=4$.
 \end{example}

\section{Experimental Results}
\label{sec:exp}

We  implemented the algorithms and 
experimentally evaluated their scalability
and usefulness 
in various settings. 



\subsection{Implementation and Optimizations}

The system is implemented in Python 3. We used Any Python Tree Data package \cite{anytree} to represent the abstraction trees, and Python's dictionaries for the polynomials. To allow for scalability of the algorithms, in terms of
their execution time,
we have employed optimizations in different
parts of the implementation, and we explain them next.

\mypara{Efficient ML computation}
Given the polynomials $\mathcal{P}$ and an abstraction tree $T$, a naive way to compute the monomial loss of a node $v\in T$ is as follows: let $v_0,\ldots, v_m$ be the descendants of $v$ in $T$. For every polynomial $P \in \mathcal{P}$, replace every occurrence of any $v_i$ for $0\leq i \leq m$ by $v$ to obtain $P_v$, and then compute $ML(v) = \sum_{P \in \mathcal{P}}\numM{P}-\numM{P_v}$. We note that this computation requires traversing the polynomials for every node in the abstraction trees. For a single tree, computing the monomial loss for every tree node, may be done by traversing the polynomials only once as a first step of the algorithm as follows. Let $l$ be a leaf of $T$, and $M$ be a monomial in the polynomial $P$ where $M$ contains the variable $l$. We denote by $M_l$ the monomial obtained by removing $l$ from $M$. For every polynomial $P$, we use a hash table $D_{P}$ that maps each leaf $l \in T$ to a set $\{M_l\mid M \text{ is a monomial in } P \text{ and } l \text{ appears in } M\}$. Then, the monomials loss of a node $v$ in $T$, where $l_0,\ldots,l_m$ are the leaves descendants of $v$, is 
$$\sum_{i=0}^m|D_{P}[l_i]|-|\bigcup_{i=0}^m D_{P}[l_i]|$$   
Intuitively, the monomials in $\{M \mid M_{_i}\in D_{P}[l_i], 0 \leq i \leq m\}$ are the monomials affected by replacing $l_0,\ldots,l_m$ with $v$, and their number is $\sum_{i=0}^m|D_{P}[l_i]|$. Every element $e$ that appears in more than one $D_{P}[l_i]$ for $0 \leq i \leq m$ is replaced by a single monomial $e\cdot v$ (with a different coefficient). Thus the number of different monomials obtained by the replacement is the number of distinct elements in $\bigcup_{i=0}^m D_{P}[l_i]$. The monomial loss is 
the difference between the total number of affected monomials and the number of remaining monomials.


\mypara{Optimizing $A_v$ computation}  Recall the procedure\\ \texttt{computeArray} in Algorithm \ref{algo:varSelection}, that computes the array $A_v$ for each node $v$. The procedure computes the array $A_v$ of a node $v\in T$ using the arrays $A_{v_1},\ldots, A_{v_m}$, where $v_1,\ldots, v_m$ are $v$'s children in $T$. The computation is done in a dynamic programming fashion, and includes a loop over the entries of each one of the arrays $v_1,\ldots, v_m$, each of size $\numM{\mathcal{P}}-B+1$. 

In practice, most of the entries of the arrays contain $\bot$; thus, instead of using arrays to represent $A_v$ we used hash tables, such that if the $i$ entry of the array is $j\neq\bot$, then the table contains the key $i$ with the value $j$. Now, the procedure loops over the hash keys (i.e., only entries that are not $\bot$), which is usually significantly smaller than  $\numM{\mathcal{P}}-B+1$. 
Additionally, note that a valid variable set in a subtree of a node $v\in T$ of height 1 may contain either $v$'s children or $v$. In the former case both the $VL$ and $ML$ are 0; thus we add the key $ML(v)$ with the value $VL(v)$ to $A_v$ for every node $v$ of height 1 in $T$ 
instead of 
invoking 
\texttt{computeArray} on it. 

\subsection{Benchmarking}
\label{sec:benchmarks}
We have developed a dedicated benchmark that involves
both synthetic and real data as follows.



   
\mypara{Telephony Company benchmark} We used the provenance generated for the query from our running example, where the plans price was parametrized by month and plan (by 12 and 128 variables respectively). The tables were populated with randomly generated date as follows. For each customer select randomly one of 128 possible plans, 5 digit zip code and the total number of calls durations for each month. We varied the number of customers from 10K to 5M, the latter leading to an input database of 65M tuples. The resulting provenance contained 100,000 polynomials, each one contains combinations of the 128 plans variables and 12 month variables, with total size of over  8GB.


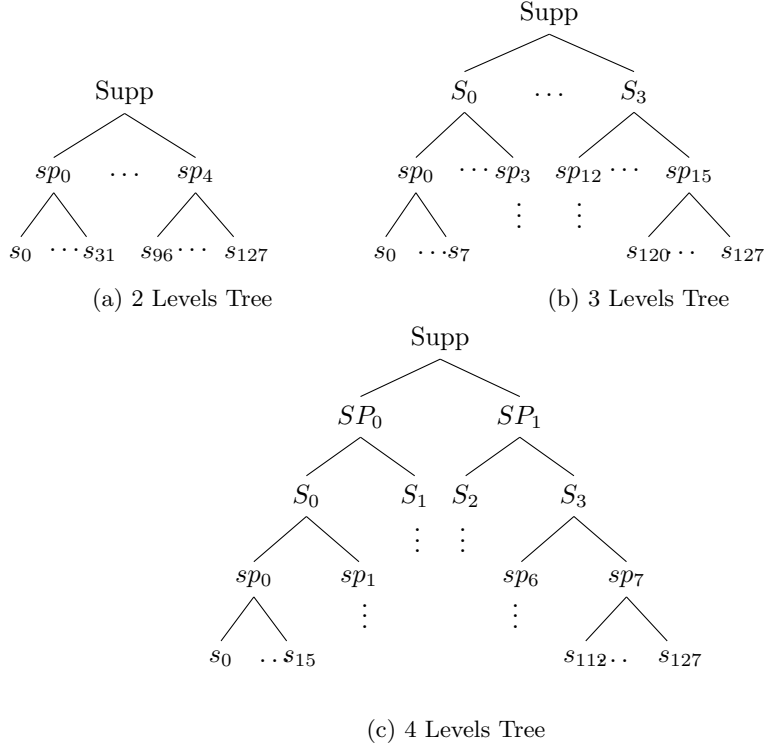
\begin{figure}
	\centering
	\begin{subfigure}[b]{0.4\linewidth}
		\begin{tikzpicture}
		\Tree [.Supp 
		[.$sp_0$ $s_{0}$ \quad $s_{31}$ ] 
		[.$sp_4$ $s_{96}$ \quad $s_{127}$ ]
		]
		\node (x) at (0,-1) {$\ldots$} ;	
		\node (x) at (-0.8,-2) {$\ldots$} ;	
		\node (x) at (0.9,-2) {$\ldots$} ;	
		\end{tikzpicture}
		\caption{2 Levels Tree}
		\label{fig:t1}
	\end{subfigure}%
	\begin{subfigure}[b]{0.6\linewidth}
		\begin{tikzpicture}
		\Tree [.Supp 
		[.$S_0$ 
		[.$sp_0$ $s_{0}$ \quad $s_{7}$ ] $sp_{3}$
		]
		[.$S_3$ $sp_{12}$
		[.$sp_{15}$ $s_{120}$ \quad $s_{127}$ ] ]
		]
		\node (x) at (0,-1) {$\ldots$} ;	
		\node (x) at (-1,-2) {$\ldots$} ;	
		\node (x) at (1,-2) {$\ldots$} ;	
		\node (x) at (-1.55,-3.1) {$\ldots$} ;	
		\node (x) at (1.75,-3.1) {$\ldots$} ;		
		\node (x) at (0.4,-2.5) {$\vdots$} ;		
		\node (x) at (-0.4,-2.5) {$\vdots$} ;		
		\end{tikzpicture}
		\caption{3 Levels Tree}
		\label{fig:t2}
	\end{subfigure}%
	
	\begin{subfigure}[b]{\linewidth}
		\begin{center}
			\begin{tikzpicture}
			\Tree [.Supp 
					[.$SP_0$ 
						[.$S_0$ 
							[.$sp_0$ $s_{0}$ \quad $s_{15}$ ] 
						$sp_{1}$
						] 
						[.$S_1$ ]
					]
					[.$SP_1$  
						$S_2$ 
						[.$S_3$ 
						$sp_{6}$
							[.$sp_{7}$ $s_{112}$ \quad $s_{127}$ ] 
						] 
					]
				   ]
			\node (x) at (-1,-3.5) {$\vdots$} ;		
			\node (x) at (1,-3.5) {$\vdots$} ;		
			\node (x) at (-0.3,-2.5) {$\vdots$} ;		
			\node (x) at (0.3, -2.5) {$\vdots$} ;		

			\node (x) at (-2.2,-4.2) {$\ldots$} ;	
			\node (x) at (2.3,-4.2) {$\ldots$} ;	
			\end{tikzpicture}
		\end{center}
		\caption{4 Levels Tree}
		\label{fig:t5}
	\end{subfigure}%
	\caption{TPC-H supplier abstraction trees}
	\label{fig:tpchTrees}
\end{figure}

\mypara{TPC-H benchmark}  The TPC Benchmark H (TPC-H)\cite{tpch} consists of a suite of business oriented queries. We have experimented with all non-nested TPC-H queries, a total of 8 queries (note that handling nested queries requires an extended provenance model, beyond polynomials \cite{AggPaper}), with all standard TPC-H settings. Due to space limitations, we report the results for three of the queries, namely Q1, Q5 and Q10 which are representative in the sense that they yield a large number of provenance polynomials, each containing a large number of monomials; the observed trends for the other queries were similar.   

Using 10GB input data, the provenance generated by query 5 consists of 25 polynomials, where the maximal polynomial contains 10890 monomials, and the minimal polynomial contains 10772 monomials (average size of 10839.8 monomials). In the provenance generated by query 10 the number of polynomials was 993306, where the largest polynomials contains 67 monomials, and the smallest only 2 monomials (average of 15.78 monomials per polynomial). Finally, for query 1 we obtain 8 polynomials, each one of size 11265.

For the choice of parameterization (i.e. where to place variables) there is no existing benchmark to our knowledge. We introduced suppliers variables $s_i$ and parts $p_i$ variables for $0 \leq i \leq 127$, and parameterized the discount attribute of the \texttt{LINEITEMS} table based on the \texttt{SUPPKEY} and \texttt{PARTKEY} attributes, where we used the variable $s_i$ if the suppliers key $k \mod 128 = i$, and similarly for the parts variable $p_j$. In realistic scenarios, assuming the user defines the variables and trees manually, the expected abstraction trees are likely to have no more than 128 leaves. 



\mypara[]{Abstraction trees} We generated 7 different abstraction tree structures for each set of variables (plans, suppliers and parts), with growing number of valid variables sets. Abstraction tree of type 1 is a 2 level tree as shown in Figure \ref{fig:t1}. We varied the number of the inner nodes $sp_i$ from 2 to 64. type 2, 3, and 4 trees are 3-level trees (see Figure \ref{fig:t2}), with root fan-out of 2, 4, and 8 respectively. Finally, the trees of type 5, 6, and 7 are 3-level trees as depicted in Figure \ref{fig:t5}, where in type 5 trees the root have 2 children ($SP_i$) each one with 2 children ($S_i$), type 6 trees have 2 $SP_i$ nodes each one with 4 children, and in type 7 trees the root has 4 children, and each one of them has 2 children. The number of $sp_i$ nodes varied from 2 and up to 16 (see Table \ref{tbl:treesTypes} in the Appendix).

We have conducted experiments for
examining the scalability of the approach, its usefulness in terms of the speedup in the time it
takes to use the compressed provenance polynomials compared with the uncompressed polynomials, and the quality of the greedy algorithm, in terms of accuracy and compression time.
The experiments were executed on Windows 10, 64-bit, with
8GB of RAM and Intel Core i7-4600U 2.10 GHz processor.

\subsection{Experimental Results}\label{subsec:results}

\begin{figure*}
	\centering
	\includegraphics[width = \linewidth]{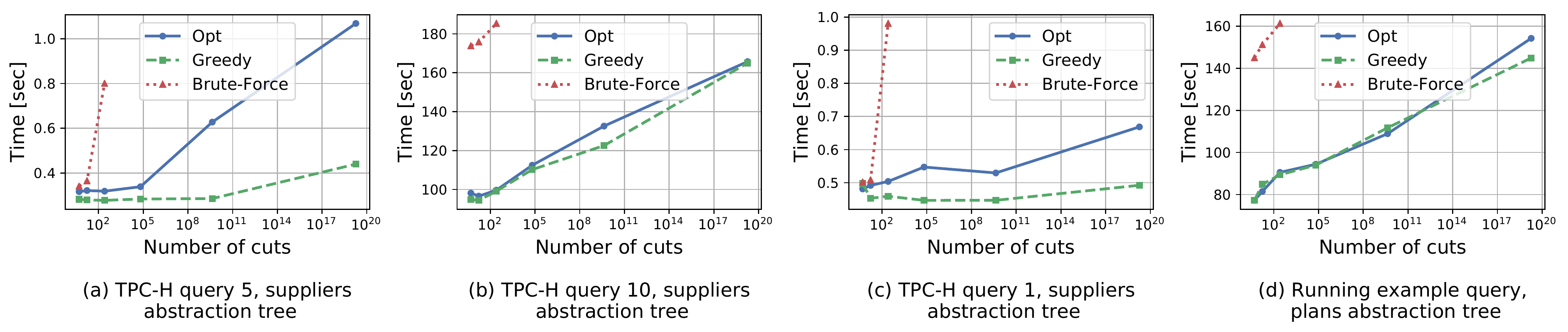}
	\caption{Compression Time as a Function of Valid Variables Sets for 2 Levels Tree (tree type 1)}
	\label{fig:compType1}
\end{figure*}


\begin{figure*}
	\centering
	\includegraphics[width = \linewidth]{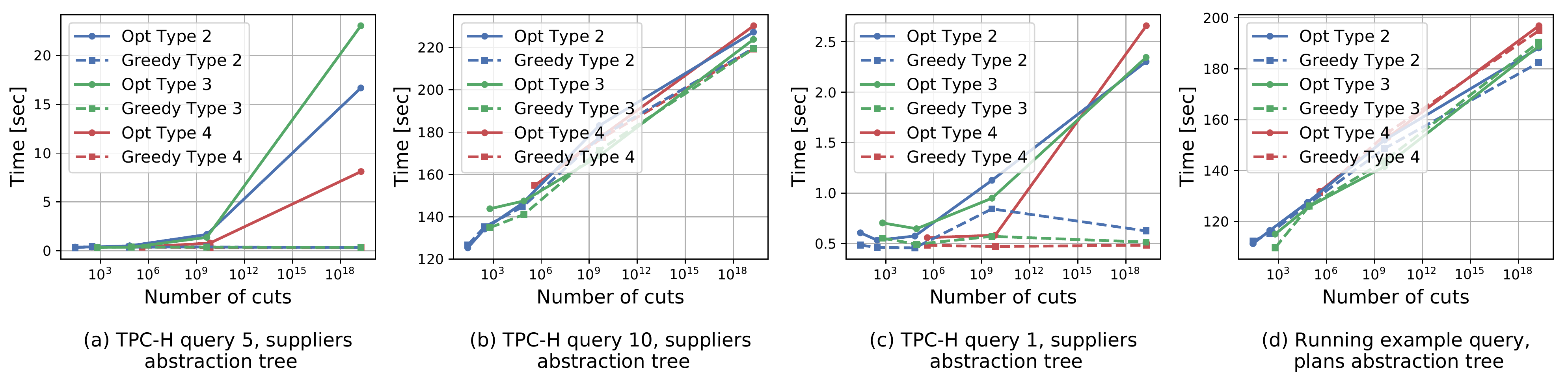}
	\caption{Compression Time as a Function of Valid Variables Sets for 3 Levels Tree}
	\label{fig:compType2}
\end{figure*}


\begin{figure*}
	\centering
	\includegraphics[width = \linewidth]{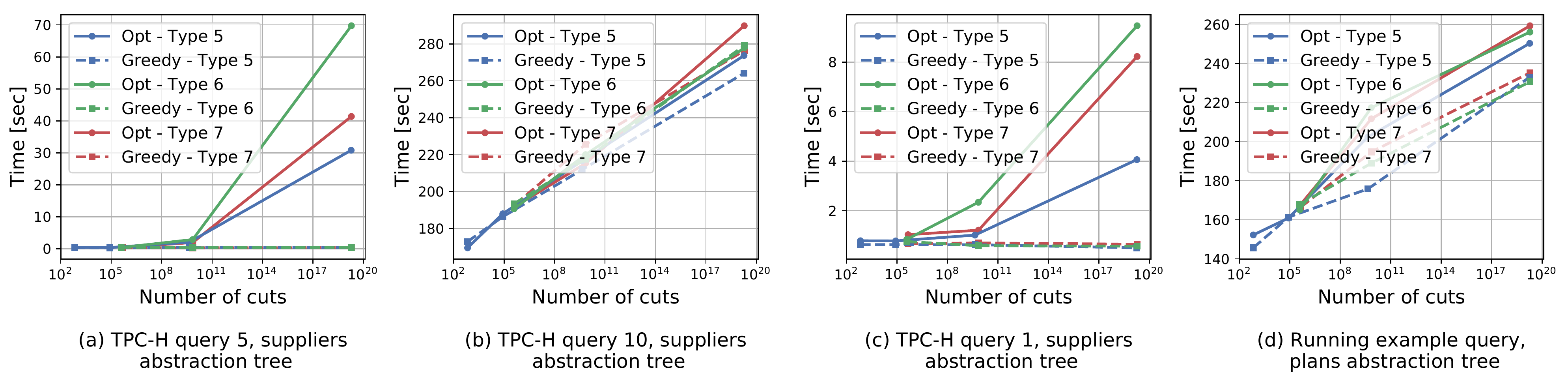}
	\caption{Compression Time as a Function of Valid Variables Sets for 4 Levels Tree}
	\label{fig:compType3}
\end{figure*}

\begin{figure*}
	\centering
		\includegraphics[width = \linewidth]{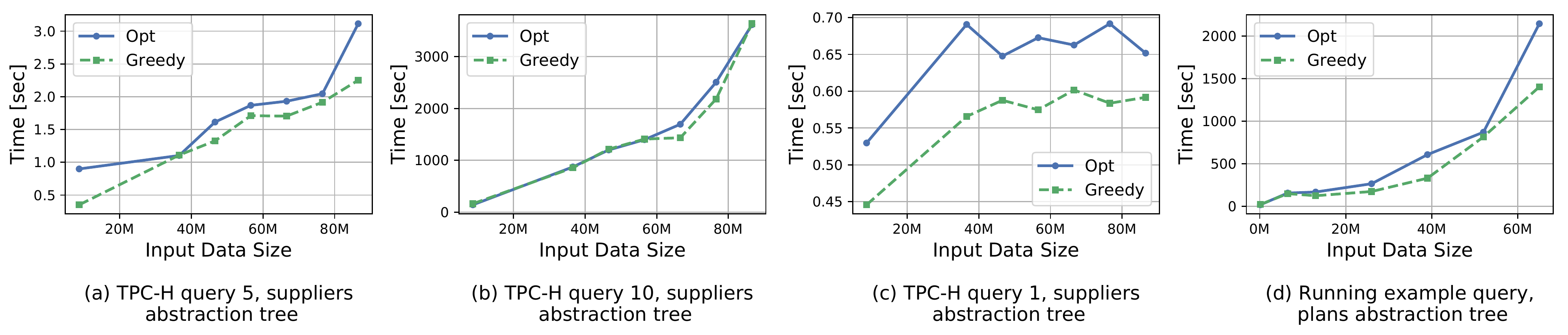}
		\caption{Compression Time as a Function of the Input Data Size}
			\label{fig:compInputDB}
\end{figure*}


\begin{figure*}
	\centering
	\includegraphics[width = \linewidth]{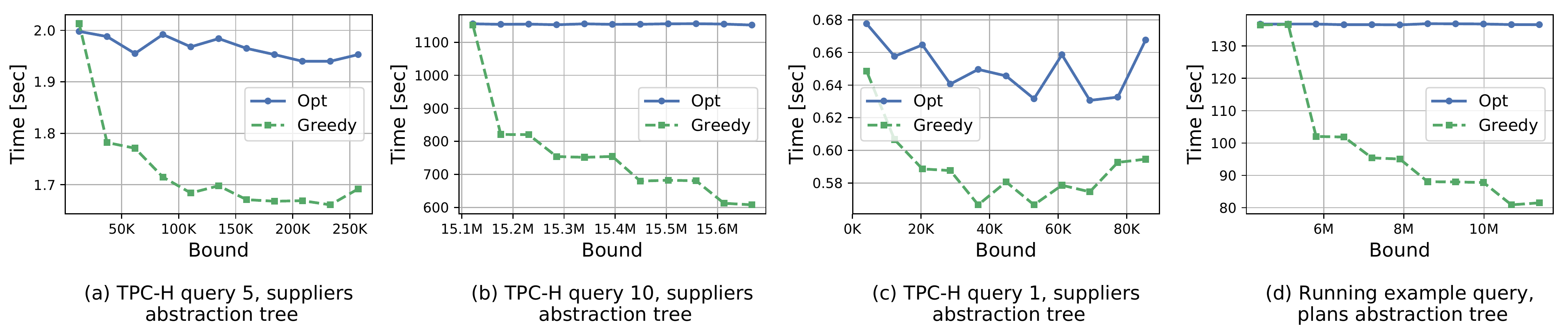}
	\caption{Compression Time as a Function of Bound}
	\label{fig:bound}
\end{figure*}


\begin{figure*}
	\centering
    \includegraphics[width = \linewidth]{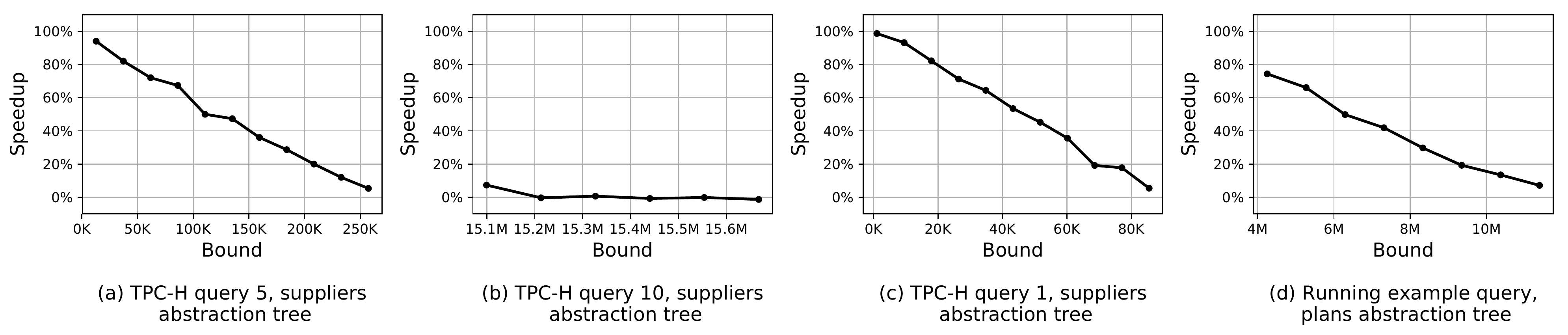}
	\caption{Assignment Time Speedup as a Function of Bound}
	\label{fig:assTime}
\end{figure*}


\begin{figure*}
	\centering
		\includegraphics[width = \linewidth]{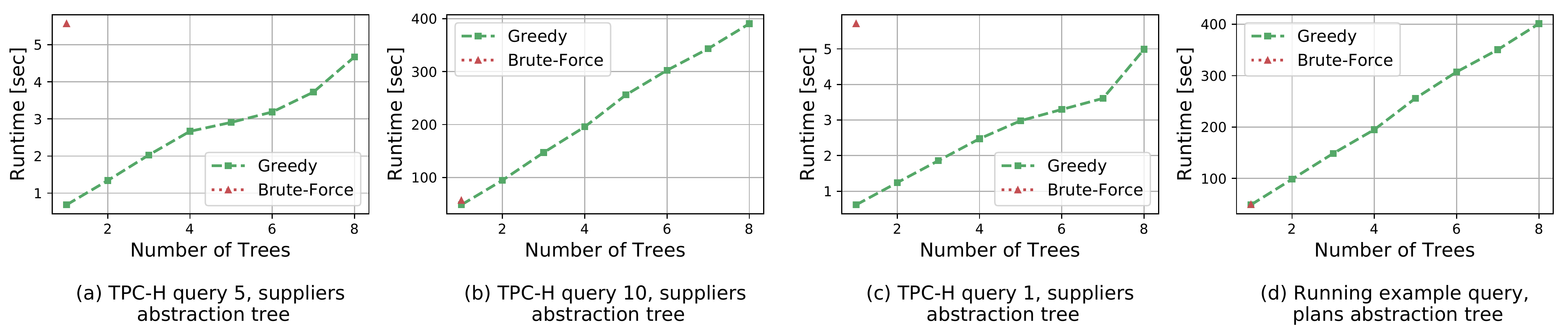}
	\caption{Compression Time as a Function of Number of Trees}
	\label{fig:compNumOfTrees}
\end{figure*}

\begin{table}\footnotesize

	\begin{tabular}{c}
	\begin{tabular}{cc}\hline
		TCPH query 5& TCPH query 10\\
		\begin{tabular}{ccc}
		Tree Type & Accuracy & Speedup \\ \hline
		 1 &100\% & 20.44\% \\
		 2& 95.26\% &37.72\%\\
		 3& 90.46\%&45.8\%\\
		 4& 88.32\%&50.22\%\\
		 5& 90.99\%&49.51\%\\
		 6& 87.16\%&63.52\%\\
		 7& 87.06\%&61.71\%\\ 
		\end{tabular}
		 &
		 \begin{tabular}{ccc}
		Tree Type & Accuracy & Speedup \\ \hline
		   1 &100\% & 28.03\% \\
		 2& 81.39\% &30.46\%\\
		 3& 74.78\%&36.96\%\\
		 4& 64.7\%&36.61\%\\
		 5& 68.74\%&37.64\%\\
		 6& 65.01\%&34.67\%\\
		 7& 55.95\%&41.72\%\\
		\end{tabular}\\
		\hline
	\end{tabular}
	\\
	\\
	\begin{tabular}{cc}\hline
	TCPH query 1& Running example\\
	 \begin{tabular}{ccc}
			Tree Type & Accuracy & Speedup \\ \hline
		 1 &100\% & 9.18\% \\
		 2& 95.94\% &26.02\%\\
		 3& 92.32\%&30.72\%\\
		 4& 83.38\%&36.67\%\\
		 5& 90.99\%&39.3\%\\
		 6& 89.33\%&43\%\\
		 7& 88.41\%&45.94\%\\
		\end{tabular}
		&
	
		\begin{tabular}{ccc}
		Tree Type & Accuracy & Speedup \\ \hline
		 1 &99.84\% & 21.22\% \\
		 2& 78.58\% &21.23\%\\
		 3& 62.74\%&27.86\%\\
		 4& 65.4\%&30.95\%\\
		 5& 61.95\%&27.49\%\\
		 6& 65.79\%&23.78\%\\
		 7& 57.77\%&30.89\%\\  
	\end{tabular}\\\hline
	\end{tabular}
	\end{tabular}
	\caption{Greedy algorithm average accuracy and speedup}
	\label{tbl:greedy_acc}
\end{table}

The first set of experiments aims at studying the provenance compression time. In the following, we refer to Algorithm \ref{algo:varSelection} (performing exact computation) as Opt VVS, and Algorithm \ref{algo:greedy} as the greedy algorithm. We have also compared the performance of our algorithms to a baseline of a brute force algorithm, that loops over all possible VVS and selects the optimal one. We examine the execution time of the algorithm for each dataset, and for each one of the abstraction tree, using a single tree in each execution. For all experiments we have set the bound to be 0.5 of the input polynomials size, except for those studying the effect of the bound itself. We present the results for the execution using the supplier abstraction tree for the TPC-H queries, the results obtained using the parts abstraction tree showed similar trends.

\mypara{Tree Structure}
Figures \ref{fig:compType1} -- \ref{fig:compType3} presents the compression time as a function of the number of valid variables set for different tree's types. We observed a moderate growth in the compression time using Opt VVS (solid lines) and the greedy algorithm (dashed lines) in all cases, while the brute force algorithm (dotted lines) was able to complete the computation only when the number of VVS was less than 80,000. For TPC-H queries 5 and 1 (Figures \ref{fig:compType1}a and \ref{fig:compType1}c), the greedy algorithm was up to 22\% faster than the optimal algorithm and for TPC-H query 10 and the running example query (Figures \ref{fig:compType1}b and \ref{fig:compType1}d), 
their running time  was similar (note, though,  the different scales).
This is because 
the required compression in running example can be obtain only using the root, and for TPC-H query 10 there is no abstraction that lead to the desired compression, thus in the the latter queries   the greedy algorithm have to traverse the whole tree, whereas for queries 1 and 5 it halts when the required provenance size is reached (without going over all the tree nodes).
Similar results were obtained for 3 level trees (Figure \ref{fig:compType2}) and 4 level trees (Figure \ref{fig:compType3}). The results of the brute force algorithm (omitted from the graphs) were above 559 seconds for type 2 tree with 66050 VVS for query, 595 seconds for query 5, 756 for query 10 and 737 for the running example query. 

\mypara{Data Size}
Figure \ref{fig:compInputDB}   presents the provenance compression time as a function of the input data size (number of tuples). We used TPC-H 1Gb and growing fragment size of the 10Gb data set. Using about 40M tuples, every polynomial in the provenance polynomials generated by query 1 contains all possible combination of parts and suppliers variables. This is due to the fact that the number of polynomials generated by the query is relatively small (only 8).  Thus the polynomials generated using larger fragments of the data are similar, up to the coefficients, and the computation time is similar from that point onwards as shown in Figure \ref{fig:compInputDB}c. The greedy algorithm outperforms Opt VVS in this case by 15\% in average. For the other queries, we observed a moderate growth in the computation time with respect to the input data size.
 

\mypara[]{Bound}
Figure \ref{fig:bound} depicts the compression time as a function of the bound. We used the abstraction trees to compute the maximal and minimal possible compression bound for the provenance generated by each one of the queries. We then varied the bound to examine the effect of the bound on the computation time of the algorithms. The execution time of the Opt VVS is not affected by the bound, while the greedy algorithm execution time decreases as the bound increases, showing that it effectively exploits cases where a solution may be found by traversing less nodes.

\mypara[]{Compression Gain} The second set of experiments aims at assessing the usefulness of the approach: it studies the time it takes to use the compressed provenance for observing results under hypothetical scenarios, compared with the time of the original provenance expression. Figure \ref{fig:assTime} shows the assignment time speedup as a function of the compression bound. The for queries 1, 5 was up to 100\% and just below 80\% for the running example. The provenance resulting from query 10 consist of a large number of polynomials (about 100,000) with small number of monomials (15 in average). 
Here, the maximal possible compression is relatively small (about 0.03\%) and the speedup is negligible.  

\mypara[]{Greedy Algorithm} 
This set of experiments aims at assessing the performance of the greedy algorithm and the quality of its results. We used a set of eight (3-level) binary trees, each with 16 leaf. Each tree contains 16 out of 128 variables. We then examine the affect of the number of possible trees on the execution time of the greedy algorithm. The results are shown in Figure \ref{fig:compNumOfTrees}. We observed a moderate growth for all the queries.
%
 %
We then examine the quality of the greedy algorithm results. We compared the $VL$ of the resulting VVS generated by the greedy algorithm to the $VL$ of the VVS computed by Opt VVS using. Table \ref{tbl:greedy_acc} depicts the accuracy and speedup compared to Opt VVS for the different queries and trees. The accuracy for Type 1 trees was 100\% in most cases; this is because type 1 trees are 1 level trees, thus if the optimal VVS is not the root, it contains leaves and middle level nodes, the former are added to the initialized greedy VVS and the latter are inserted to the candidates list. Since the number of leaves of each middle level node are equal in those trees, the $VL$ of all of them is equal, and in most cases, arbitrary selection and insertion of the middle level nodes leads to an optimal solution. 
In general, the results of the greedy algorithm were more precise for TCP-H queries 1 and 5. This is due to the fact that the provenance resulting from those queries contains significantly smaller number of polynomials, 8 and 25 respectively, compared to about 100,000 in both the provenance of TCP-H query 10 and the running example. Therefore, the latter are more sensitive to ``locally" greedy selection. Furthermore, the speedup of the greedy algorithm was between 9.18\% and up to 63.52\%.

\begin{figure}
	\centering
	\includegraphics[width = \linewidth]{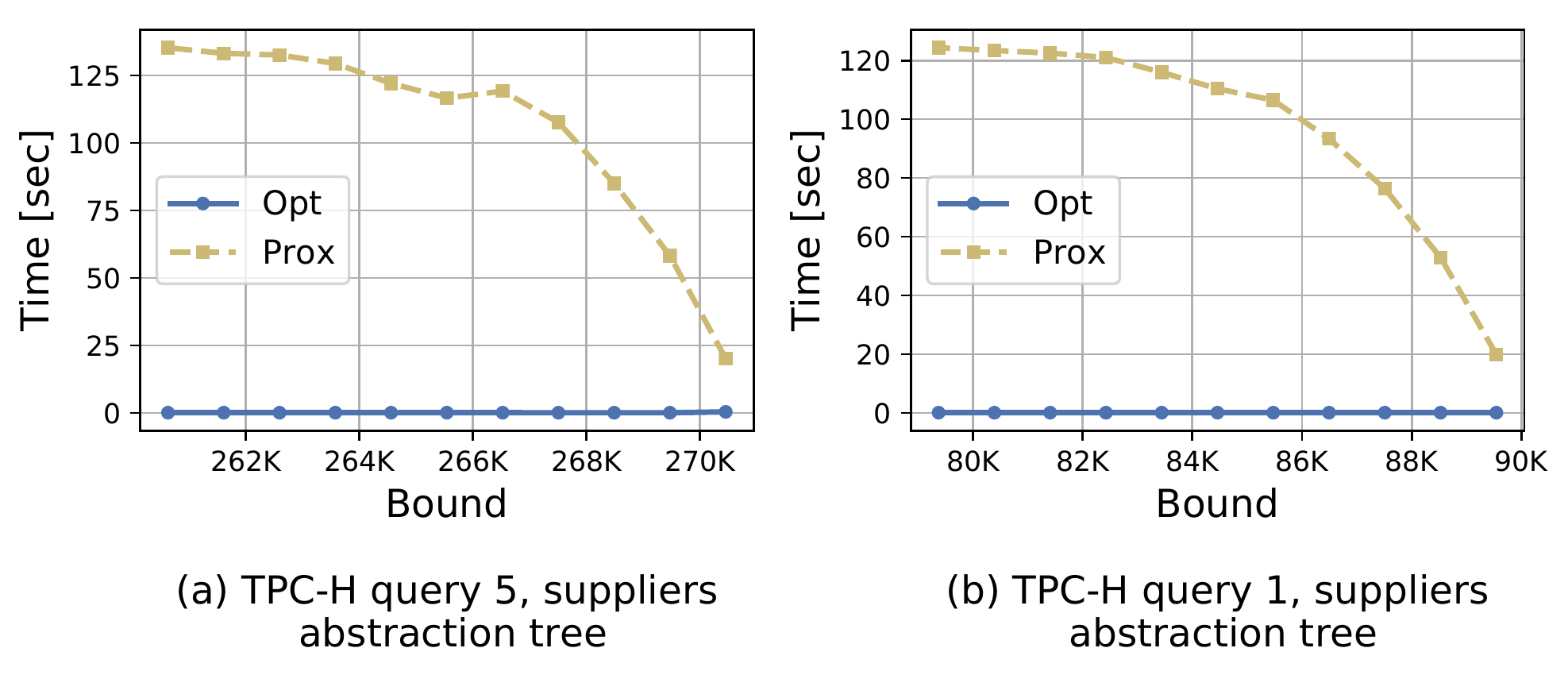}
	\caption{Compression Time as a Function of Bound}
	\label{fig:prox_exp}
\end{figure}

\mypara[]{Gain of abstraction trees} 
As mentioned in \S\ref{sec:intro}, there is previous work on summarization guided by various optimization tasks. Most generally, the algorithm presented in \cite{prox} 
uses an oracle to define semantic constraints over the summarizations and their gain. We have implemented the algorithm proposed by \cite{prox} and used for each experimental case the relevant abstraction tree as the black-box oracle that determines the semantic constraints, provenance size and candidate scores (using the terminology of \cite{prox}). Note that the algorithm of \cite{prox} gets as input a single polynomial; 
thus,
we extend it to work on a set of polynomials. The run-times of this competitor algorithm compared to our optimal algorithm as a function of the bound (using the setting described in the bound experiment) for TPC-H queries 1 and 5 are presented in Figure \ref{fig:prox_exp}. The competitor algorithm did not finish the computation on query 10 and the running example query within 24 hours; we thus observe the crucial advantage of our optimal algorithm in terms of runtime. 

The algorithm of \cite{prox} iteratively examine, using the oracle, the grouping of all possible monomial's pair in the provenance polynomials in order to reduce its size with minimal loss. Thus, as the bound decreases, the number of oracle calls, and the algorithm's total runtime increase. 
In terms of accuracy, unlike our solution the algorithm of \cite{prox} has no quality guarantees. In practice, for the cases where convergence was reached, it has achieved good accuracy, though still inferior to the optimal solution that our algorithm finds (the quality of \cite{prox} was approx. 96\% of the optimal on average).

\section{Related Work}
\label{sec:related}


\mypara{Data Provenance} 
Data provenance has been studied for different query languages, from relational algebra to Nested Relational Calculus, and
with different provenance models (see, e.g.,
\cite{trio,GKT-pods07,GlavicSAM13,GeertsP10,KenigGS13,CheneyProvenance,w3c,ProvenanceBuneman,Olteanu12})
and applications \cite{Suciu,Meliou2,RoyS14,Gatt,GlavicAMH10,
	icde16}. We focus on this work on provenance polynomials, an approach that originated in the provenance semiring approach \cite{GKT-pods07} for SPJU queries and have since been used in for further expressive queries, and for applications ranging from hypothetical reasoning and reenactment to explanations of missing answers (e.g.,  \cite{GKT-pods07, cidr13, AggPaper, FosterGT08, XuZAT17,BidoitHT15,glaviccikm}). Furthermore, the work of \cite{ICDT09-containment} has shown that the provenance semiring framework is expressive enough so that provenance polynomials can capture many of the previously proposed provenance models -- this intuitively means that we assign different semantics to the addition and multiplication operations. The polynomial model that we work with in this paper is generic, in the sense that we assume no concrete semantics of the operations. Our solution may thus be used for abstraction of provenance in a wide range of applications and using a wide range of provenance models.


\mypara{Hypothetical Reasoning} 
Answering queries under hypothetical updates was studied in \cite{GhandeharizadehHJ96,BalminPP00,cidr13,valICDT16, vldbj,glaviccikm,ArabGKRG18}.
This work has laid the grounds for an efficient way of
reasoning with hypothetical updates, namely that of 
\emph{provisioning}. The solution employed in \cite{valICDT16} uses provenance polynomials of the flavor we have studied in this work, but while raising the need for specification of scenarios that can guide provenance generation, no such models were proposed there. Different notions of hypothetical reasoning vary in the level of detail that they track (from fine-grained provenance tracking required for hypothetical reasoning to coarse grained tracking of, e.g., the modules that were used), and the scale of data they can handle; there is an obvious tradeoff between these factors. In the present paper we formalize and study, for the first time to our knowledge, this tradoff, and present the notion of abstraction trees to control it.

\mypara{Provenance Storage}
Multiple lines of work have focused on efficient provenance storage (e.g.,
\cite{Bao,Chapman,OlteanuZ12,Olteanu12, icdt14}) including support for distributed
systems \cite{ramp,lipstick,IkedaPW11,Titian,ChenLLAJLZ17}. In contrast to our work, none of these solutions has focused on use-based approximation of the provenance, namely to leverage the expected uses -- restrictions on hypothetical reasoning, in our case -- to abstract the provenance expression. The efficient storage techniques described above are thus complementary to our work; implementing our approach for summarization to work in tandem with solutions for efficient storage is an important goal for future work.  Specifically, in either centralized or distributed settings, once provenance is computed we may wish to interact with it -- i.e., assign values -- using a single, weak machine such as the analyst's phone/tablet, thus avoiding costly storage; this is where our abstraction would come into play.

\mypara{Provenance Size Reduction} Our solution aims at reducing provenance size through a novel notion of abstraction. There are multiple other lines of work aiming at reducing provenance size. One such approach is to track only parts of the provenance that are of interest, e.g., based on user specification \cite{glavic2013using,GlavicICDE09,selP,CheneyAA14}. Such selective provenance tracking is complementary to ours, as we start from a polynomial which may reflect full or partial provenance, invariably for our approach. Other works have studied techniques for the factorization or summarization of provenance, for compression. Such summarization may be lossless \cite{Olteanu12, icdt14}, and then it may again be used for storage in conjunction with our lossy approach. Lossy compression techniques have been proposed for multiple purposes, namely probability computation \cite{DBLP:journals/pvldb/ReS08} or the presentation of explanations \cite{LeeNLG17}. The only work, to our knowledge, that studied provenance compression for hypothetical reasoning is that of \cite{valICDT16} where algorithms are proposed to design sketches of the provenance that account for a given set of hypothetical scenarios. As mentioned in \cite{valICDT16}, their construction assumes that the hypothetical scenarios are given, but proposing means for specifying them has been left there for future work. 



\section{Conclusion and Future Work}
\label{sec:conc}

We have presented a novel solution 
for summarizing provenance by grouping together variables and replacing them with an abstract meta-variable. Grouping is constrained through abstraction trees, defining logical hierarchies over the variables. We have introduced the problem of optimizing abstractions that reduce the provenance size below a given threshold, and studied its computational complexity. We have proposed efficient algorithms and heuristics for the problem, and experimentally shown their effectiveness.    

There are many intriguing directions to explore as future extensions of the framework. 
One such promising direction is to combine our approach with other solutions for efficient provenance storage and generation surveyed in~\S\ref{sec:related}.
Arguably, the most intriguing direction for future work is enabling online compression:
Our solutions take  as input provenance polynomials; as explained in \S\ref{sec:intro}, the use case is that the provenance is generated once but then used multiple times, possibly by multiple analysts. 
Naturally, one may wonder whether we can 
avoid its 
costly computation to begin with, and instead compress provenance alongside with its generation?

One direction for using our approach for on-the-fly compression is through sampling. The idea is to generate only a sample of the provenance, apply our algorithms to the sample, and obtain a choice of Valid Variable Set (VVS). Then use the same VVS to group variables in the full input database, and generate provenance over this smaller set of ``summary" variables to obtain a summarized full provenance. The gaps are then (1) how to sample so that we correctly represent the data for this purpose, and (2) how to adapt the bound given to the algorithm when run over the sample. 

A naive approach for addressing the first challenge 
 is to uniformly sample from the data; depending on the query structure, this however may not lead to a representative sample of the output or its provenance. A heuristic solution, tailored for simple \texttt{GROUPBY} queries as in our running example, is to sample only from the relations that include the grouping attributes, leaving the other relations  intact. This intuitively results in a sample from the output polynomials, though this sample is still not guaranteed to be representative. Designing algorithms that achieve such a representative sample is an important challenge for future work.

As for adapting the bound to the sample, a possible heuristics is to set this bound as a function of (1) the original bound and (2) the ratio between the full provenance size and the sample provenance size, e.g. the first multiplied by the second. To this end, we need an estimation of the full provenance size. This is again a challenge, related to the classical problem of estimating queries output size (e.g.,~\cite{LiptonNSS93,ChenY17}). A possible heuristic is to perform multiple samples of increasing sizes, compute the provenance for each of them, and extrapolate the full provenance \cite{brezinski2013extrapolation}.  This component of provenance size estimation is 
an important challenge for future work.

\paragraph*{Acknowledgements}
This research has been funded by the European Research Council (ERC) under the European Union's Horizon 2020 research and innovation programme (grant agreement No. 804302), the Israeli Ministry of Science, Technology and Space, Len Blavatnik and the Blavatnik Family foundation, Blavatnik Interdisciplinary Cyber Research Center at Tel Aviv University, and the Pazy Foundation.
The contribution of Yuval Moskovitch is part of Ph.D. thesis research conducted at Tel Aviv University.

\bibliographystyle{abbrv}
\bibliography{bibliography}

\appendix

\newpage
 
\section{Lower Bound}\label{sec:hardness-app}

In this section we prove that 
the decision problem corresponding to provenance abstraction 
is  NP-hard in the provenance size (\Cref{prop:main}).
In fact, the proof shows that the problem is NP-hard even if we restrict our attention to abstraction trees of height two and to a single polynomial in which each monomial contains  exactly two variables.  Technically, the proof is done by a reduction from the vertex cover problem. 

\paragraph{Simplifying Assumptions.} 
Recall that the decision problem (\Cref{def:decisionProblem}) is as follows:
Given a set of polynomials $\mathcal{P}$, a valid abstraction forest $\mathcal{T}$ which is compatible with it, and desired size $B\in \{1 \ldots \numM{\mathcal{P}} \}$ and  granularity $K\in\{1 \ldots \numV{\mathcal{P}} \}$ of the set of polynomials
resulting from abstracting $\mathcal{P}$ according to $S$. 
Determine if there exists a precise valid variable set for $B$ and $K$.
For simplicity, and 
without loss of generality, we ignore in this section   
the trivial cases where either 
$B=1$, $B=\numM{\mathcal{P}}$, $K=1$, or $K=\numV{\mathcal{P}}$.


\paragraph{Notations.} 
For clarity, we wish to distinguish in this section between names of metavariables and names of variables.
Thus, we assume an infinite set $\id \in \ID$ of  \emph{identifier} which we use to form names: 
We generate names of metavariables by adding a parenthesized numerical superscript to a (possibly with a prime, a dot, a bar, etc.) identifier, e.g., $x^{(1)}$, $\dot{x}^{(1)}$, ${\bar x}^{(i)}$, and $\bar{\ddot{x}}^{(1)}$ are names of metavariables.
We generate variable names by adding one or two subscripts to names of metavariables, e.g., $x^{(1)}_2$, $\ddot{x}^{(1)}_k$, ${\bar x}^{(i)}_{a,2}$, and ${\bar{\dot{x}}}^{(i)}_{j,k}$ are names of variables.


\paragraph{Uniformly Partitioned Polynomials}
Technically, we prove that the decision problem is NP-Hard even if we  restrict our attention to a particular class of polynomials, dubbed \emph{uniformly partitioned polynomials}, which we now define.

\begin{definition}
A polynomial $P$ is \emph{uniformly partitioned according to a finite set $X$ of metavriables, a blowup factor $\bu \in \Nat$, and a set $I \subset \range{|X|}\times\range{|X|}$
},
denoted by $\upp{P}{X}{n}{I}$,
if for any $(a,b)\in I$ it holds that $a<b$ and  $P$ is of the form
\[
P \  = \sum_{(a,b)\in I} P^{(a,b)}  
\quad
\mathit{where}
\quad
P^{(a,b)} = \!\!\!\!\sum_{i,j \in \range{n}} x^{(a)}_i \cdot x^{(b)}_j \ .
\]
\end{definition}

\begin{example}\label{ex:uni-part-poly}
Assume $X=\{x^{(1)}, x^{(2)}, x^{(3)}, x^{(4)}\}$,
$n=3$, and $I=\{\}$.
The following polynomial $P$ is uniformly partitioned according to the set  $X=\{x^{(1)}, x^{(2)}, x^{(3)}, x^{(4)}\}$ of metavariables, 
the blowup factor $n=3$, and the set $I=\{(1,2),(1,3),(2,3),(2,4)\}$:
\[
 P = P^{(1,2)} + P^{(1,3)} + P^{(2,3)} + P^{(2,4)} ,  
\]
where
\[
\begin{array}{rcl}
P^{(1,2)} & = &
 x^{(1)}_1 \cdot x^{(2)}_1 + 	
 x^{(1)}_1 \cdot x^{(2)}_2 + 	
 x^{(1)}_1 \cdot x^{(2)}_3 + {}
 \\	
 &&
 x^{(1)}_2 \cdot x^{(2)}_1 + 	
 x^{(1)}_2 \cdot x^{(2)}_2 + 	
 x^{(1)}_2 \cdot x^{(2)}_3 + {}
 \\	
 &&
 x^{(1)}_3 \cdot x^{(2)}_1 + 	
 x^{(1)}_3 \cdot x^{(2)}_2 + 	
 x^{(1)}_3 \cdot x^{(2)}_3 
\end{array}
\]
\[
\begin{array}{rcl}
P^{(1,3)} & = &
 x^{(1)}_1 \cdot x^{(3)}_1 + 	
 x^{(1)}_1 \cdot x^{(3)}_2 + 	
 x^{(1)}_1 \cdot x^{(3)}_3 + {}
 \\	
 &&
 x^{(1)}_2 \cdot x^{(3)}_1 + 	
 x^{(1)}_2 \cdot x^{(3)}_2 + 	
 x^{(1)}_2 \cdot x^{(3)}_3 + {}
 \\	
 &&
 x^{(1)}_3 \cdot x^{(3)}_1 + 	
 x^{(1)}_3 \cdot x^{(3)}_2 + 	
 x^{(1)}_3 \cdot x^{(3)}_3 
\end{array}
\]
\[
\begin{array}{rcl}
P^{(2,3)} & = &
 x^{(2)}_1 \cdot x^{(3)}_1 + 	
 x^{(2)}_1 \cdot x^{(3)}_2 + 	
 x^{(2)}_1 \cdot x^{(3)}_3 + {}
 \\	
 &&
 x^{(2)}_2 \cdot x^{(3)}_1 + 	
 x^{(2)}_2 \cdot x^{(3)}_2 + 	
 x^{(2)}_2 \cdot x^{(3)}_3 + {}
 \\	
 &&
 x^{(2)}_3 \cdot x^{(3)}_1 + 	
 x^{(2)}_3 \cdot x^{(3)}_2 + 	
 x^{(2)}_3 \cdot x^{(3)}_3 
\end{array}
\]
\[
\begin{array}{rcl}
 P^{(2,4)} & = &
 x^{(2)}_1 \cdot x^{(4)}_1 + 	
 x^{(2)}_1 \cdot x^{(4)}_2 + 	
 x^{(2)}_1 \cdot x^{(4)}_3 + {}
 \\	
 &&
 x^{(2)}_2 \cdot x^{(4)}_1 + 	
 x^{(2)}_2 \cdot x^{(4)}_2 + 	
 x^{(2)}_2 \cdot x^{(4)}_3 + {}
 \\	
 &&
 x^{(2)}_3 \cdot x^{(4)}_1 + 	
 x^{(2)}_3 \cdot x^{(4)}_2 + 	
 x^{(2)}_3 \cdot x^{(4)}_3 \quad,
\end{array}
\]
\end{example}

\begin{claim}\label{Cla:UPPsize}
Let $\upp{P}{X}{n}{I}$ be a uniformly partitioned polynomial.
The following holds:
\[
\begin{array}{r@{\,\,}c@{\,\,}lcr@{\,\,}c@{\,\,}l}
\numM{P^{(a,b)}} & =& n^2
& \qquad & 
\numV{P^{(a,b)}} & =& 2 \cdot n
\\
\numM{P} & = & |I| \cdot n^2
& & 
\numV{P} & = & |X|\cdot n
\end{array}
\]
\end{claim}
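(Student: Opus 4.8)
The plan is to prove all four equalities by direct counting, first for a single block $P^{(a,b)}$ and then lifting to the whole polynomial $P$; the only genuine content is verifying that the obvious term counts are not deflated, i.e., that distinct terms do not collapse to equal monomials and that distinct variable symbols do not coincide.

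First I would handle a single block $P^{(a,b)}$ with $(a,b)\in I$, so that $a<b$. The defining double sum $P^{(a,b)} = \sum_{i,j\in\range{n}} x^{(a)}_i\cdot x^{(b)}_j$ lists $n^2$ terms, one for each pair $(i,j)\in\range{n}\times\range{n}$. To conclude $\numM{P^{(a,b)}}=n^2$ it suffices to observe that these terms are pairwise distinct monomials: since $a\neq b$, a monomial $x^{(a)}_i\cdot x^{(b)}_j$ reveals which factor carries superscript $a$ and which carries $b$, so its subscript pair $(i,j)$ is recoverable from the monomial itself. Hence distinct pairs yield distinct monomials and no two merge. For the variable count, the symbols occurring in $P^{(a,b)}$ are exactly $\{x^{(a)}_i : i\in\range{n}\}\cup\{x^{(b)}_j : j\in\range{n}\}$, and these two sets are disjoint because their members differ in superscript; each has $n$ elements, giving $\numV{P^{(a,b)}}=2n$.

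Next I would lift to $P=\sum_{(a,b)\in I} P^{(a,b)}$. The key observation for the size is that distinct blocks contribute disjoint monomial sets: every monomial of $P^{(a,b)}$ has its two factors' superscripts forming the unordered pair $\{a,b\}$, and since all index pairs in $I$ are oriented with $a<b$, this unordered pair determines $(a,b)$ uniquely. Consequently $\Monsof{P}=\bigcup_{(a,b)\in I}\Monsof{P^{(a,b)}}$ is a \emph{disjoint} union of $|I|$ sets each of size $n^2$, so $\numM{P}=|I|\cdot n^2$. For the granularity I would group $\Varsof{P}=\bigcup_{(a,b)\in I}\Varsof{P^{(a,b)}}$ by superscript: an index $c$ contributes exactly the $n$ distinct symbols $x^{(c)}_1,\ldots,x^{(c)}_n$ whenever $c$ occurs in some pair of $I$, and variables with different superscripts never coincide; since each of the $|X|$ metavariables of $X$ is used by $I$ (every index in $\range{|X|}$ appears in some pair), all $|X|$ groups contribute, yielding $\numV{P}=|X|\cdot n$.

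The step I expect to treat most carefully is not the arithmetic but the disjointness / non-collapse claims: everything reduces to the fact that the orientation $a<b$ lets us read off the unordered superscript pair (and hence the originating block) from each monomial, and that variables are distinguished by their $(\text{superscript},\text{subscript})$ pair. I would also explicitly flag that the equality $\numV{P}=|X|\cdot n$ presumes that every metavariable of $X$ is actually used by $I$ (no ``isolated'' index); this holds in the intended vertex-cover reduction, where $X$ and $I$ play the roles of the vertices and edges of a graph without isolated vertices, and can otherwise be arranged without loss of generality.
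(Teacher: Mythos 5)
Your proof is correct and coincides with the paper's own treatment: the paper states Claim~\ref{Cla:UPPsize} without an explicit proof, regarding it as immediate counting from the definition of $\upp{P}{X}{n}{I}$ (illustrated only by Example~\ref{ex:uni-part-poly-claim}), and your careful verification of non-collapse of monomials and variables via the superscript/subscript structure and the orientation $a<b$ is exactly the implicit argument. Your caveat that $\numV{P}=|X|\cdot n$ presupposes every index of $\range{|X|}$ occurring in some pair of $I$ is a fair and correct observation; the paper tacitly makes the same assumption, and it is satisfied in the vertex-cover reduction where it is needed.
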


\begin{example}\label{ex:uni-part-poly-claim}
Consider the polynomials $P$ and $P^{(1,2)}$ defined in \Cref{ex:uni-part-poly}.
The following holds:
\[
\begin{array}{lclcl}
\numM{P^{(a,b)}} & =& 3^2
\\
\numV{P^{(a,b)}} & =& 2 \cdot 3
& \text{as} & \Varsof{P^{(a,b)}} = 
\{ x^{(1)}_1, x^{(1)}_2, x^{(1)}_3,  x^{(2)}_1,  x^{(2)}_2,  x^{(2)}_3 \}
\\
\numM{P} & = & 4 \cdot 3^2
\\ 
\numV{P} & = & 4\cdot 3
& \text{as} & \Varsof{P} = 
\{ x^{(1)}_1, x^{(1)}_2, x^{(1)}_3,  x^{(2)}_1,  x^{(2)}_2,  x^{(2)}_3, 
\\
& & &  &
\hfill x^{(3)}_1, x^{(3)}_2, x^{(3)}_3,  x^{(4)}_1,  x^{(4)}_2,  x^{(4)}_3\}
\\
\end{array}
\]
\end{example}

\paragraph{Flat Abstractions}
We now define a special class of abstraction forests for uniformly partitioned polynomials. 
We refer to these abstraction forests as \emph{flat abstractions}.
Intuitively, the flat abstraction of a uniformly partitioned polynomial  $\upp{P}{X}{n}{I}$ partitions $P$'s variables into $|X|$ sets of equal size ($n$) according to the metavariable they pertain to.

\begin{definition}
An abstraction forest $\mathcal{T}$ is the \emph{flat abstraction} of a uniformly partitioned polynomial $\upp{P}{X}{n}{I}$ if
\[
\begin{array}{l}
\mathcal{T}=\{T_1,\ldots,T_{|X|}\}, \mbox{ where } T_i=(N_i,E_i) \mbox{ and }
\\
\quad  N_i=\{x^{(i)}\} \cup \{x^{(i)}_j \mid j \in \range{n}\}  
\\
\quad E_i=\{ (x^{(i)}, x^{(i)}_j) \mid j \in \range{n}\}\ . 
\end{array}
\]
\end{definition}

\begin{example}\label{ex:flat-abstraction}
\Cref{fig:flat-abstraction} depicts the flat abstraction of
the polynomial $P$  defined in \Cref{ex:uni-part-poly}.
\end{example}

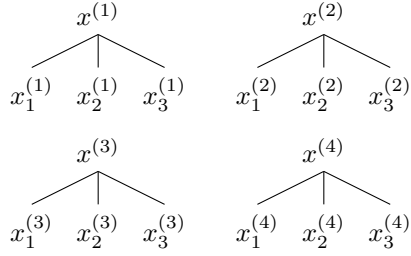
\begin{figure}[t]
\centering
\begin{tabular}{cc}
\begin{tikzpicture}
	\Tree [.$x^{(1)}$ $x^{(1)}_1$ $x^{(1)}_2$ $x^{(1)}_3$ 
	]
\end{tikzpicture}
& 
\begin{tikzpicture}
	\Tree [.$x^{(2)}$ $x^{(2)}_1$ $x^{(2)}_2$ $x^{(2)}_3$ 
	]
\end{tikzpicture}
\\
\begin{tikzpicture}
	\Tree [.$x^{(3)}$ $x^{(3)}_1$ $x^{(3)}_2$ $x^{(3)}_3$ 
	]
\end{tikzpicture}
& 
\begin{tikzpicture}
	\Tree [.$x^{(4)}$ $x^{(4)}_1$ $x^{(4)}_2$ $x^{(4)}_3$ 
	]
\end{tikzpicture}
\end{tabular}	
	\caption{A flat abstraction.}
	\label{fig:flat-abstraction}
\end{figure}

\begin{claim}\label{Cla:UPPAbsCompt}
The \emph{flat abstraction} of a uniformly partitioned polynomial $\upp{P}{X}{n}{I}$ is compatible with $P$.
\end{claim}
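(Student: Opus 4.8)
The plan is to unfold the definition of compatibility of an abstraction forest and check it tree-by-tree. Since $\mathcal{T}$ is compatible with $P$ precisely when each $T_i$ is compatible with $P$, I would fix an arbitrary $i \in \range{|X|}$ and verify the three defining requirements for $T_i$: (i) every leaf of $T_i$ is a variable occurring in $P$; (ii) every internal node of $T_i$ is a metavariable not occurring in $P$; and (iii) every monomial of $P$ contains at most one node of $T_i$. Throughout I would lean on the naming convention fixed earlier in this section---variables carry subscripts while metavariables carry only a parenthesized superscript---and on the explicit shape $P = \sum_{(a,b)\in I} P^{(a,b)}$ with $P^{(a,b)} = \sum_{i,j\in\range{n}} x^{(a)}_i \cdot x^{(b)}_j$.

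For (i), the leaves of $T_i$ are exactly $x^{(i)}_1,\ldots,x^{(i)}_n$. Assuming (as the count $\numV{P}=|X|\cdot n$ of \Cref{Cla:UPPsize} presupposes) that the index $i$ participates in at least one pair of $I$, there is a block $P^{(i,b)}$ or $P^{(a,i)}$ whose summands range over all subscripts in $\range{n}$; this block introduces each $x^{(i)}_j$ into $P$, so every leaf of $T_i$ lies in $\Varsof{P}$. For (ii), the sole internal node of $T_i$ is its root $x^{(i)}$, which---carrying no subscript---is a metavariable, and since every variable occurring in $P$ is subscripted, $x^{(i)} \notin \Varsof{P}$.

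The only point that needs real care is (iii). Every monomial $m \in \Monsof{P}$ has the form $x^{(a)}_\ell \cdot x^{(b)}_{\ell'}$ for some $(a,b)\in I$, and the definition of $\upp{P}{X}{n}{I}$ stipulates $a<b$, hence $a \neq b$. The root $x^{(i)}$ does not occur in $P$, so the nodes of $T_i$ that could appear in $m$ are leaves $x^{(i)}_\ell$, and such a leaf appears iff $a=i$ or $b=i$; because $a\neq b$, at most one of these equalities can hold, giving $|m \cap T_i| \leq 1$. The hard part is thus isolating this use of the strict inequality $a<b$: it is exactly what forbids a single monomial from drawing two variables out of the same tree. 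Finally I would record that the node sets $N_i$ are pairwise disjoint (distinct superscripts), so $\mathcal{T}$ is also a valid abstraction forest, and conclude that the flat abstraction is compatible with $P$.
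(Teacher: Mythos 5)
Your proof is correct and is essentially the argument the paper intends: the paper states \Cref{Cla:UPPAbsCompt} without proof, treating it as an immediate definition-chase, and your tree-by-tree verification of the three compatibility requirements---in particular isolating the stipulation $a<b$ (hence $a\neq b$) as the reason no monomial $x^{(a)}_\ell \cdot x^{(b)}_{\ell'}$ can contain two nodes of the same $T_i$---is exactly the routine check left implicit. Your parenthetical observation that requirement (i) needs every index of $X$ to occur in some pair of $I$ is a fair catch of an assumption the paper also makes silently, since \Cref{Cla:UPPsize}'s count $\numV{P}=|X|\cdot n$ presupposes the same thing.
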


\begin{claim}\label{Cla:UPPAbssize}
Let $\mathcal{T}$ be the flat abstraction forest of a polynomial $\upp{P}{X}{n}{I}$, 
$S$ a valid variable set  of $\mathcal{T}$, and $Y\subseteq S$ the set comprised of the metavariables in $S$. 
It holds that 
$$
\begin{array}{rcl}
\numMsubV{P}{S}\! 
& = & 
\sum_{(i,j)\in I}
\begin{cases}
1 & x^{(i)} \in Y, x^{(j)}  \in Y \\
n^2 & x^{(i)} \not\in Y, x^{(j)} \not\in Y\\
n & \mathit{otherwise}
\end{cases}
\\[6pt]
\numVsubV{P}{S}\! 
& = & 
|Y| + (|X| - |Y|) \cdot n 
\end{array}
$$
\end{claim}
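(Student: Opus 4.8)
The plan is to exploit the fact that every tree in the flat abstraction has height one, and then to argue that abstraction acts independently on the blocks indexed by $I$. First I would determine the shape of an arbitrary valid variable set $S$. Since each $T_i$ consists of the root $x^{(i)}$ together with the leaves $x^{(i)}_1,\ldots,x^{(i)}_{n}$, the antichain requirement (condition 2) forces a dichotomy per tree: either $x^{(i)}\in S$, in which case no leaf of $T_i$ may appear, or $x^{(i)}\notin S$, in which case the covering requirement (condition 1) forces all of $x^{(i)}_1,\ldots,x^{(i)}_{n}$ into $S$ (as the only ancestors of $x^{(i)}_j$ are $x^{(i)}$ and $x^{(i)}_j$ itself). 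Hence $S$ is completely described by the set $Y\subseteq S$ of selected metavariables, and $\subV{P}{S}$ replaces a leaf $x^{(i)}_j$ by $x^{(i)}$ exactly when $x^{(i)}\in Y$, leaving it untouched otherwise.

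The granularity formula then follows immediately by grouping the surviving variables according to their superscript: each index $i$ with $x^{(i)}\in Y$ contributes the single variable $x^{(i)}$, while each index with $x^{(i)}\notin Y$ contributes the $n$ distinct leaves $x^{(i)}_1,\ldots,x^{(i)}_{n}$, which gives $\numVsubV{P}{S}=|Y|+(|X|-|Y|)\cdot n$.

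For the size formula, the crux is a no-collision lemma: monomials originating in distinct blocks $P^{(a,b)}$ stay distinct after abstraction, so that $\numMsubV{P}{S}=\sum_{(a,b)\in I}\numMsubV{P^{(a,b)}}{S}$. I would prove this by observing that abstraction preserves superscripts---a leaf $x^{(i)}_j$ and its image $x^{(i)}$ both belong to the $i$-th group---so an abstracted monomial of $P^{(a,b)}$ still carries the unordered superscript pair $\{a,b\}$; combined with the convention that $a<b$ for every $(a,b)\in I$, this pair pins down $(a,b)$ uniquely, excluding cross-block identifications. It then remains to count distinct monomials inside a single block, whose $n^2$ monomials are $x^{(a)}_i\cdot x^{(b)}_j$, by a short case split on membership in $Y$: if $x^{(a)},x^{(b)}\in Y$ everything collapses to $x^{(a)}\cdot x^{(b)}$ (count $1$); if exactly one root, say $x^{(a)}$, lies in $Y$, the index $i$ is identified while $j$ survives, leaving the $n$ monomials $x^{(a)}\cdot x^{(b)}_j$ (count $n$, symmetrically on the other side); and if neither root is in $Y$ all $n^2$ products remain distinct (count $n^2$). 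Summing these per-block values over $I$ yields the claimed expression.

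I expect the no-collision argument across blocks to be the only step requiring genuine care rather than bookkeeping, since it is precisely what licenses splitting $\numMsubV{P}{S}$ into a sum over $I$; the per-block case analysis and the granularity count are then routine once the structure of $S$ via $Y$ is fixed.
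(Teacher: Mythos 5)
Your proposal is correct and matches the paper's intent: the paper states \Cref{Cla:UPPAbssize} without an explicit proof (only the illustration in \Cref{ex:UPPAbssize}), and your argument---reducing every valid variable set to a choice of metavariable set $Y$ per height-one tree, counting surviving variables by superscript, and justifying the blockwise decomposition $\numMsubV{P}{S}=\sum_{(a,b)\in I}\numMsubV{P^{(a,b)}}{S}$ via the preserved superscript pair together with the convention $a<b$, followed by the three-way case split inside a block---is exactly the bookkeeping the claim relies on. The only caveat, inherited from the paper itself (cf.\ \Cref{Cla:UPPAbsCompt} and \Cref{Cla:UPPsize}), is the implicit assumption that every index of $X$ occurs in some pair of $I$; otherwise some leaves would be absent from $P$ and the granularity formula would overcount.
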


\begin{example}\label{ex:UPPAbssize}
Consider the polynomial $P$  defined in \Cref{ex:uni-part-poly}, its flat abstraction,
defined in \Cref{ex:flat-abstraction}, 
and the valid variable abstraction set    
$S=\{x^{(1)}, x^{(2)}_1, x^{(2)}_2, x^{(2)}_3, x^{(3)},  x^{(4)}_1,  x^{(4)}_2,  x^{(4)}_3 \}$. The set of metavariables in $S$ is
$Y=\{x^{(1)}, x^{(3)}\}$.
\[
 \subV{P}{S} = P_S^{(1,2)} + P_S^{(1,3)} + P_S^{(2,3)} + P_S^{(2,4)} ,  
\]
where
\[
\begin{array}{l}
\begin{array}{rcl}
P_S^{(1,2)} & = &
 3 \cdot x^{(1)} \cdot x^{(2)}_1 + 	
 3 \cdot x^{(1)} \cdot x^{(2)}_2 + 	
 3 \cdot x^{(1)} \cdot x^{(2)}_3 
\end{array}
\\[2ex]
\begin{array}{rcl}
P_S^{(1,3)} & = & 9 \cdot x^{(1)}  \cdot x^{(3)}
\end{array}
\\[2ex]
\begin{array}{rcl}
P^{(2,3)} & = &
 3 \cdot x^{(2)}_1 \cdot x^{(3)} + 	
 3 \cdot x^{(2)}_2 \cdot x^{(3)} + 	
 3 \cdot x^{(2)}_3 \cdot x^{(3)} 
\end{array}
\\[2ex]
\begin{array}{rcl}
 P^{(2,4)} & = &
 x^{(2)}_1 \cdot x^{(4)}_1 + 	
 x^{(2)}_1 \cdot x^{(4)}_2 + 	
 x^{(2)}_1 \cdot x^{(4)}_3 + {}
 \\	
 &&
 x^{(2)}_2 \cdot x^{(4)}_1 + 	
 x^{(2)}_2 \cdot x^{(4)}_2 + 	
 x^{(2)}_2 \cdot x^{(4)}_3 + {}
 \\	
 &&
 x^{(2)}_3 \cdot x^{(4)}_1 + 	
 x^{(2)}_3 \cdot x^{(4)}_2 + 	
 x^{(2)}_3 \cdot x^{(4)}_3 \quad.
\end{array}
\end{array}
\]
Note that 
\[
\Varsof{\subV{P}{S}} 
=
\{
x^{(1)}, x^{(3)}
\}
\cup 
\{
x^{(2)}_1, x^{(2)}_2, x^{(2)}_3, x^{(4)}_1,  x^{(4)}_2,  x^{(4)}_3
\}\,.
\]
\end{example}

\begin{claim}\label{Cla:NoZeroAbs}
Let $\mathcal{T}$ be the flat abstraction forest of a polynomial $\upp{P}{X}{n}{I}$, 
and $S$ a valid variable set  of $\mathcal{T}$.
It holds that $0 < \numMsubV{P}{S}$. 
\end{claim}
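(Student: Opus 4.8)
The plan is to read the result straight off the closed-form expression for the abstracted size established in \Cref{Cla:UPPAbssize}. There, writing $Y\subseteq S$ for the set of metavariables occurring in $S$, we have
\[
\numMsubV{P}{S} = \sum_{(i,j)\in I}
\begin{cases}
1 & x^{(i)} \in Y,\ x^{(j)} \in Y \\
\bu^2 & x^{(i)} \notin Y,\ x^{(j)} \notin Y \\
\bu & \mathit{otherwise.}
\end{cases}
\]
First I would observe that each of the three possible summands is bounded below by $1$: since the blowup factor satisfies $\bu \geq 1$ (otherwise the ranges $\range{\bu}$ would be empty and $P$ would contain no monomials), we have $1 \leq \bu \leq \bu^2$, so whichever case applies to a pair $(i,j)\in I$, its contribution to the sum is at least $1$.

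Next I would note that the index set $I$ is nonempty. Indeed, a uniformly partitioned polynomial with $I=\emptyset$ is the empty sum and has no monomials, contradicting that $\upp{P}{X}{n}{I}$ is a genuine polynomial; equivalently, by \Cref{Cla:UPPsize} we have $\numM{P}=|I|\cdot \bu^2$, so $\numM{P}\geq 1$ forces $|I|\geq 1$. Summing a quantity that is at least $1$ over the nonempty set $I$ then yields $\numMsubV{P}{S} \geq |I| \geq 1 > 0$, which is exactly the claim.

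As a sanity check (and a self-contained alternative to invoking \Cref{Cla:UPPAbssize}), one can argue directly: abstracting according to $S$ replaces each variable by a metavariable or leaves it unchanged, an operation that can only \emph{merge} distinct monomials of $P$ into a common monomial, never delete one outright. Because all coefficients in $\upp{P}{X}{n}{I}$ are positive and merging adds coefficients, no coefficient can cancel to zero, so the image of any monomial of $P$ survives in $\subV{P}{S}$; since $P$ has at least one monomial, so does $\subV{P}{S}$. I do not expect any genuine obstacle here: the only points requiring care are the two degenerate boundary cases, a zero blowup factor $\bu=0$ and an empty $I$, and both are excluded by the definition of a uniformly partitioned polynomial, so the inequality is immediate once the closed form of \Cref{Cla:UPPAbssize} is in hand.
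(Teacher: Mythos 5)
Your proposal is correct, and it actually contains two arguments: your primary route goes through the closed form of \Cref{Cla:UPPAbssize}, while your ``sanity check'' is essentially the paper's entire proof. The paper disposes of the claim in one remark: all coefficients of $\upp{P}{X}{n}{I}$ are positive, so abstraction can only merge monomials (adding their coefficients), never cancel one, hence the abstracted polynomial retains at least one monomial. Your closed-form route is a legitimately different derivation and buys slightly more: each summand in \Cref{Cla:UPPAbssize} is $1$, $\bu$, or $\bu^2$, all at least $1$ once $\bu \geq 1$, so you obtain the quantitative bound $\numMsubV{P}{S} \geq |I|$ rather than mere positivity. The trade-off is that your route inherits the two degeneracy caveats you correctly flag ($\bu = 0$ and $I = \emptyset$ both make $P$ the empty sum), and strictly speaking the paper's definition of a uniformly partitioned polynomial does not explicitly forbid $I = \emptyset$ --- though in the reduction the hypothesis $E \neq \emptyset$ guarantees it, and the paper's own one-line argument tacitly assumes $\numM{P} > 0$ as well, so you are no less careful than the original. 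The paper's cancellation argument is also more robust in that it applies to any polynomial with positive coefficients, not just uniformly partitioned ones; since you supply both, nothing is missing.
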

The claim holds for a rather mundane reason:
the coefficients of the monomials in $P$ are always positive. 
Thus, the abstraction process can only ``unify'' multiple monomials into one, but it never reaches a point when one monomial cancels another.

\begin{theorem}\label{the:multi}
Given a uniformly partitioned polynomial  $\upp{P}{X}{n}{I}$, 
a size bound $B\in\range[2]{(|I| \! \cdot\! n ) \!-\!1}$, and 
a granularity bound $K\in\range[2]{(|X| \! \cdot\! n ) \!-\!1}$,
determining if $P$ has a precise abstraction for $B$ and $K$ according to its flat abstraction forest is NP-hard. 
\end{theorem}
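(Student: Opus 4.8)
The plan is to reduce from \textbf{Vertex Cover restricted to cubic ($3$-regular) graphs}, which is NP-complete. Given a cubic graph $G=(V,E)$ with $|V|=N$ and a target cover size $t$, I would build a uniformly partitioned polynomial $\upp{P}{X}{n}{I}$ by taking one metavariable per vertex, so that $|X|=N$, letting $I=E$ be the edge set, and fixing any blowup factor $n\ge 2$ (concretely $n=2$). A valid variable set $S$ of the flat abstraction is completely determined by the subset $Y\subseteq X$ of metavariables lifted to their roots (the abstracted ones), the other trees contributing their leaves. By \Cref{Cla:UPPAbssize} the granularity $\numVsubV{P}{S}=|Y|+(N-|Y|)\cdot n$ depends \emph{only} on $|Y|$, so a single choice of $K$ pins down $|Y|=t$ exactly; I would set $K=N+(N-t)(n-1)$.

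The heart of the argument is the size. Rewriting the per-edge case analysis of \Cref{Cla:UPPAbssize} algebraically, the number of monomials after abstracting by $Y$ is $\numMsubV{P}{S}=|I|n^2-(n^2-n)\sum_{i\in Y}\deg(i)+(n-1)^2\,e(Y)$, where $e(Y)$ counts the edges of $G$ with both endpoints in $Y$. Regularity is exactly what makes this usable: for a $3$-regular graph with $|Y|=t$ fixed, the degree sum equals $3t$, so the size collapses to an affine, strictly increasing function of the single quantity $e(Y)$. Writing $Z=V\setminus Y$ (with $|Z|=s:=N-t$), a short count gives $e(Y)=|I|-3s+e(Z)$, so $e(Y)$ attains its minimum value $|I|-3s$ \emph{iff} $e(Z)=0$, i.e.\ iff $Z$ is independent, i.e.\ iff $Y$ is a vertex cover. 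I would therefore place the size bound at this minimum, $B=|I|+(n-1)\cdot 3\cdot s$, so that a $t$-set realizes size $B$ precisely when it is a vertex cover.

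With these choices, a precise abstraction for $B$ and $K$ exists iff some $Y$ with $|Y|=t$ has $e(Z)=0$, i.e.\ iff $G$ has a vertex cover of size $\le t$ (any cover of size $\le t$ extends to one of size exactly $t$, still with independent complement). \Cref{Cla:NoZeroAbs} guarantees no degenerate cancellation clouds this count, and \Cref{Cla:UPPAbsCompt} that the abstraction is genuinely compatible. It remains to check the admissible ranges: one computes $B=|I|+(n-1)\cdot 3 s<n|I|$ exactly when $s<N/2$, and $K=N+s(n-1)<Nn$, so both $B\in\range[2]{(|I|n)-1}$ and $K\in\range[2]{(|X|n)-1}$ hold whenever the companion independent set $s=N-t$ has size below $N/2$. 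Since every regular graph satisfies $\alpha(G)\le N/2$, this is exactly the interesting regime; one restricts to non-bipartite cubic instances (where the inequality is strict) or attaches a tiny gadget forcing $s<N/2$, ruling out the boundary case $s=N/2$ (which in any event arises only for bipartite graphs, where Vertex Cover is polynomial).

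The main obstacle is precisely the difficulty this construction is engineered to circumvent. On a general graph the abstracted size depends on \emph{two} independent structural quantities, the degree sum $\sum_{i\in Y}\deg(i)$ and the internal edge count $e(Y)$, so a single equation $\numMsubV{P}{S}=B$ cannot by itself isolate the vertex-cover property. Passing to regular graphs freezes the degree sum once $|Y|$ is fixed, reducing the size to a monotone function of $e(Y)$ alone; the decisive observation is then that vertex covers are \emph{exactly} the $t$-subsets minimizing $e(Y)$, which lets the exact bound $B$ be positioned at that minimum. I expect the fiddly part of the formal write-up to be converting this qualitative ``minimum'' argument into the exact two-bound arithmetic and verifying that $B$ and $K$ remain inside the stated ranges.
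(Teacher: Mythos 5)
Your proof is correct, but it takes a genuinely different route from the paper's. The paper reduces from \emph{unrestricted} vertex cover and buys correctness with a huge blowup factor $n=|V|^3$: the gap between the per-edge contributions ($1$ or $n$ for a covered edge versus $n^2$ for an uncovered one, per \Cref{Cla:UPPAbssize}) is made so large that any abstraction whose size lands in $\range[2]{|V|^5}$ cannot leave an edge uncovered, while $K=(|V|-k)\cdot|V|^3+k$ pins down the number of lifted roots. The price is that the exact compressed size of a cover is not controlled (it depends on how many edges have \emph{both} endpoints in the cover), so the paper only proves existence of a precise abstraction for \emph{some} $B$ in the range and must invoke the decision oracle up to $|V|^5$ times---a polynomial-time Turing reduction. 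You instead restrict the source problem to vertex cover on cubic graphs (NP-hard by Garey--Johnson--Stockmeyer, and legitimately restrictable to non-bipartite instances since bipartite vertex cover is polynomial by K\H{o}nig's theorem), which freezes the degree sum $\sum_{i\in Y}\deg(i)=3t$ once $K$ fixes $|Y|=t$; your exact count $\numMsubV{P}{S}=|I|n^2-(n^2-n)\sum_{i\in Y}\deg(i)+(n-1)^2 e(Y)$ (which I verified, as well as the identity $e(Y)=|I|-3s+e(Z)$ and the resulting $B=|I|+3s(n-1)$) then makes the size a strictly increasing affine function of $e(Y)$ alone, so placing $B$ at the minimum characterizes covers exactly. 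This buys you a single-call many-one reduction with constant blowup $n=2$ (so a polynomial with $4|I|$ monomials instead of $|E|\cdot|V|^6$), at the cost of the boundary bookkeeping you correctly flag: the range condition $B\le |I|n-1$ fails exactly at $s=N/2$, and your disposal of that case via $\alpha(G)\le N/2$ for regular graphs, with equality only in the bipartite case, is sound (queries with $t\le N/2$ are answered NO outright on non-bipartite cubic inputs). In short: same skeleton (flat abstractions of uniformly partitioned polynomials, $K$ fixing $|Y|$, size detecting coverage), but your regularity trick upgrades the paper's gap-plus-Turing argument to an exact many-one reduction.
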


\paragraph{Reduction}
We prove \Cref{{the:multi}} via a reduction from the \emph{vertex cover} problem, a decision problem which we now define.
For notational simplicity, and without loss of generality, we omit certain easy cases from the vertex cover problem. Namely, we require that the input graph contains at least two node and one edge and 
forbid self loops.

\begin{definition}[Vertex cover]
Let $G=(V,E)$ be an undirected graph.
A set $V' \subseteq V$  is a \emph{vertex cover} of $G$ if for every edge $(x,y)\in E$ 
either $x \in V'$ or $y \in V'$.
\end{definition}

\begin{theorem}[\cite{Garey:1990}]\label{the:VC}
Given an undirected graph $G=(V,E)$,
where $V=\{v_1,\ldots,v_n\}$ for some $1<n$, 
$E \neq \emptyset$, and  
for any edge $(v_i,v_j)\in E$ it holds that $i \neq j$. 
Determining if $G$ has a vertex cover $V'$ such that $|V'|=k$, when
$k\in \range[2]{|V|\!-\!1}$, is NP-hard. 
\end{theorem}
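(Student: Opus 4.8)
The plan is to recognize Theorem~\ref{the:VC} as the classical NP-hardness of vertex cover and to reconstruct the textbook argument while handling the two features specific to this formulation: the cover size is prescribed \emph{exactly} rather than by an upper bound, and the parameter is confined to the range $\range[2]{|V|-1}$. I would obtain hardness by a polynomial-time reduction from the Boolean satisfiability problem \textsc{3-Sat}, which is NP-hard by the Cook--Levin theorem, routing through the independent-set/vertex-cover duality and then converting the resulting threshold guarantee into the exact-size statement.

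\emph{The core reduction.} Given a 3-CNF formula $\phi$ with clauses $c_1,\ldots,c_m$, I build a graph $G_\phi$ with one vertex per literal occurrence, so that $|V(G_\phi)|=3m$. Each clause $c_i=(\ell_{i,1}\vee\ell_{i,2}\vee\ell_{i,3})$ contributes a triangle on its three literal-vertices, and for every pair of vertices carrying complementary literals (some $u$ and $\bar u$) I add a conflict edge. The standard invariant is that $\phi$ is satisfiable iff $G_\phi$ has an independent set of size exactly $m$: an independent set takes at most one vertex from each triangle, so reaching size $m$ forces exactly one ``true'' literal per clause, while the conflict edges forbid selecting both a variable and its negation, making the selection a consistent (hence satisfying) assignment; conversely a satisfying assignment selects one true literal per clause and yields such a set.

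\emph{From independent set to exact vertex cover, and the range.} I then invoke the duality that $C\subseteq V(H)$ is a vertex cover of $H$ iff $V(H)\setminus C$ is an independent set. Hence the minimum vertex cover of $G_\phi$ has size $3m$ minus the maximum independent-set size; since the latter is at most $m$ and equals $m$ exactly when $\phi$ is satisfiable, $G_\phi$ has a vertex cover of size exactly $2m$ iff $\phi$ is satisfiable, and otherwise every vertex cover has size at least $2m+1$. Setting $k=2m$ thus gives a correct reduction to the exact-size problem; the equivalence with the familiar threshold version is the elementary monotonicity remark that any superset of a cover is a cover, so for $k\le|V|$ a cover of size exactly $k$ exists iff one of size at most $k$ does (pad with unused vertices). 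With $|V|=3m$ and $k=2m$ the bounds $2\le 2m$ and $2m\le 3m-1$ both hold for every $m\ge 1$, so $k\in\range[2]{|V|-1}$ and the excluded trivial parameters $k=1$ and $k=|V|$ never arise. Moreover $G_\phi$ has $3m\ge 3>1$ vertices, its triangles guarantee $E\neq\emptyset$, and neither triangle edges (between distinct occurrences) nor conflict edges (between distinct complementary vertices) create a self-loop, so the structural preconditions of the statement are met; the construction is plainly computable in time polynomial in $|\phi|$.

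\emph{Main obstacle.} Since the satisfiability-to-independent-set correspondence is entirely standard, I expect the load-bearing points to be bookkeeping rather than combinatorics: confirming that the \emph{exact}-size requirement does not make the problem easier (handled by the monotonicity/padding observation and, more sharply here, by the gap ``min cover $=2m$ or $\ge 2m+1$''), and verifying that the produced instance always lands inside the prescribed range $\range[2]{|V|-1}$ so that the trivially decidable boundary cases are genuinely avoided.
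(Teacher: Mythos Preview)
Your proof is correct, but note that the paper does not actually prove Theorem~\ref{the:VC}: it is stated with a citation to Garey and Johnson and used as a black box in the reduction establishing Lemma~\ref{lem:redVC} and Theorem~\ref{the:multi}. So there is no ``paper's own proof'' to compare against; you have supplied a self-contained argument where the paper simply invokes a reference.

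That said, your reconstruction is sound. The 3-\textsc{Sat} $\to$ independent set $\to$ vertex cover route is the standard one, and you correctly handle the two wrinkles in this particular formulation: the exact-size requirement is dispatched by the monotonicity/padding observation (and, as you note, more sharply by the gap between $2m$ and $2m+1$), and the parameter $k=2m$ with $|V|=3m$ indeed lands in $\range[2]{|V|-1}$ for every $m\ge 1$. The structural side conditions ($|V|>1$, $E\neq\emptyset$, no self-loops) are also verified. One small cosmetic point: you might remark that if a clause contains both a literal and its negation the triangle edge and the conflict edge coincide, but this causes no difficulty for the argument.
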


 \begin{lemma}\label{lem:redVC}
Let  $G=(V,E)$ be an undirected graph which satisfies  the condition of 
\Cref{the:VC}.
$G$ has a vertex cover $V'\subset V$ of size $k\in\range[2]{|V|\!-\!1}$
if and only if 
the uniformly partitioned polynomial $\upp{P}{X}{|V|^3}{I}$, where 
$X=\{ x^{(i)} \mid n_i \in V\}$ and  $I=\{ (i,j) \mid (v_i,v_j)\in E \land i < j\}$, has a precise abstraction  
according to $P$'s flat abstraction forest 
for some $B\in\range[2]{|V|^5}$ and $K=(|V|-k) \cdot|V|^3 + k$.
\end{lemma}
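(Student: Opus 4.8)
The plan is to show that the two size computations already recorded for flat abstractions (\Cref{Cla:UPPAbssize}) turn the decision problem into an exact vertex-cover question, where the blowup factor $|V|^3$ is engineered so that a \emph{single numerical threshold} on the monomial count distinguishes covers from non-covers. Throughout I would write $N:=|V|$ and recall that for the instance at hand the blowup is $N^3$, that $|X|=N$, and that $|I|=|E|$. The combinatorial heart is a bijection: a valid variable set $S$ of the flat abstraction forest is chosen tree-by-tree, and in each tree $T_i$ one either takes the root metavariable $x^{(i)}$ or keeps all its leaves; hence the set $Y\subseteq S$ of chosen metavariables corresponds exactly to the vertex subset $V_Y=\{v_i \mid x^{(i)}\in Y\}$, and conversely any $V'\subseteq V$ induces a valid $S$ (take $x^{(i)}$ when $v_i\in V'$, keep the $N^3$ leaves otherwise).

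The first step is to pin down the granularity. Using $\numVsubV{P}{S}=|Y|+(|X|-|Y|)\cdot N^3$ from \Cref{Cla:UPPAbssize} and setting it equal to $K=(N-k)N^3+k$, I would solve for $y:=|Y|$: the equation rearranges to $y(1-N^3)=k(1-N^3)$, and since $N^3\neq1$ this forces $y=k$. Thus the granularity bound is met if and only if exactly $k$ metavariables are selected, i.e. $|V_Y|=k$; this is what makes $K$ encode the size of the candidate cover.

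The second and central step is the monomial-count dichotomy. Fixing such an $S$ with $|Y|=k$ and plugging the edge classification of \Cref{Cla:UPPAbssize} in, let $a,b,c$ denote the number of edges of $G$ with two, one, and zero endpoints in $V_Y$, so that $\numMsubV{P}{S}=a+b\,N^3+c\,N^6$ with $a+b+c=|E|$. I would then verify the separation: if $V_Y$ is a vertex cover then $c=0$, whence $\numMsubV{P}{S}\le |E|\cdot N^3\le \binom{N}{2}N^3<N^5$; if $V_Y$ is not a cover then $c\ge1$, whence $\numMsubV{P}{S}\ge N^6>N^5$. Consequently, for fixed $|Y|=k$, we have $\numMsubV{P}{S}\le N^5$ \emph{iff} $V_Y$ is a vertex cover. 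Establishing this inequality chain — and thereby confirming that the gap between $|E|\cdot N^3\le N^5/2$ and $N^6$ straddles $N^5$ — is the crux, and it is exactly what the choice of blowup $N^3$ buys.

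Finally I would assemble the equivalence. For $(\Leftarrow)$, a precise abstraction for some $B\in\range[2]{N^5}$ and the given $K$ has $|Y|=k$ (granularity) and $\numMsubV{P}{S}\le N^5$, so $V_Y$ is a cover of size $k$. For $(\Rightarrow)$, a cover $V'$ of size $k$ yields the corresponding $S$, which has granularity $K$ and, being a cover, monomial count at most $N^5$. The only loose end is the lower bound $B\ge2$: the count equals $1$ exactly when $|E|=1$ and both endpoints of the unique edge lie in $V'$, and in that lone case I would invoke the hypothesis $k\le N-1$ to find a vertex $w\notin V'$ and swap out one endpoint for $w$, preserving a size-$k$ cover but leaving the edge with a single covered endpoint, so the count rises to $N^3\ge2$. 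I expect this edge case to be the only fiddly point; everything else follows directly from \Cref{Cla:UPPsize} and \Cref{Cla:UPPAbssize}, and polynomial-time computability is immediate since $P$ has $|E|\cdot N^6$ monomials.
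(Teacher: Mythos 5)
Your proof is correct and takes essentially the same route as the paper's: the same correspondence between valid variable sets of the flat abstraction forest and vertex subsets, the same granularity equation $|Y|+(|V|-|Y|)\cdot|V|^3=K$ forcing $|Y|=k$, and the same threshold argument in which an uncovered edge forces $\numMsubV{P}{S}\geq |V|^6>|V|^5$ while a cover yields $\numMsubV{P}{S}\leq |E|\cdot|V|^3\leq |V|^5$. Your handling of the lower bound $B\geq 2$ — swapping one endpoint of the unique edge for a vertex outside the cover (possible since $k\leq |V|-1$) when $|E|=1$ and both endpoints are covered — is in fact slightly more careful than the paper, which asserts $\numMsubV{P}{S}\in\range[2]{|V|^5}$ without explicitly addressing that degenerate case, where the directly constructed abstraction would collapse the polynomial to a single monomial.
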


\begin{proof}
Assume that there exists a valid variable set $S$ for $\mathcal{T}$ 
which  compresses $P$ into a polynomial $P'=\subV{P}{S}$ such  that  
$\numM{P'}=B$ and $\numV{P}=K$.
We show that the set of nodes $V'=\{ v_i \in V \mid x^{(i)} \in S\}$ corresponding to the metavariables in $S$ is a vertex cover of $G$ of size  $k$.

To see that $V'$ is a vertex cover,  
recall that   $P$ is uniformly partitioned. 
Hence, \Cref{Cla:UPPsize} ensures that each edge in $E$
corresponds to $|V|^6$ unique monomials  
and that $\numM{P}=|E|\cdot|V|^6$.
By assumption, $0 < |E|$.
Thus  $0 < \numM{P}$, and, by \Cref{Cla:NoZeroAbs}, $0 < \numM{P'}$.
By assumption, $1 < |V|$ and $B \leq |V|^5$. 
Hence, $B < |V|^6$.
By \Cref{Cla:UPPAbssize}, if there had been even a single  edge $(v_i,v_j)\in E$ such that  
$\{x^{(i)},x^{(j)}\}\cap V'=\emptyset$, it would mean  
that $|V|^6 \leq \numM{P'}$,
which would be a contradiction.

To see that $|V'|=k$, we use, again, \Cref{Cla:UPPAbssize},
and 
observe  that 
$\numV{P'} = (|V| - |V'|) \cdot |V|^3 + | V'|$.
Because $S$ is a precise abstraction,
$\numV{P'} = K$.
By assumption, $K = (|V| - k) \cdot |V|^3 + k$.
Which implies that $(|V|^3-1)\cdot(|V'|-k)=0$.
As $1 < |V|$, it must be the case that $|V|'-k$.

\smallskip
Assume $V'\subseteq V$ is a vertex cover of $G$ of size $k$.
Let \\
\mbox{$ 
 S=\{x^{(i)} \mid v_i \in V' \} \cup \{ x^{(i)}_j \in \Varsof{P} \mid v_i \not\in V'\}   
 $} 
be the set of meta\-variables annotating the roots of the abstraction trees in $\mathcal{T}$ corresponding to the nodes in the vertex cover,  
combined with the leaves of the other trees.
It is easy to see that $S$ is a  valid variable set for 
the flat abstraction forest of $P$.

As $V'$ is a vertex cover, \Cref{Cla:UPPAbssize} ensures that 
$\numMsubV{P}{S} \leq |E| \cdot |V|^3$. Clearly, $|E| \leq |V|^2$.
As $1 < |V|$, we get that\\ $\numMsubV{P}{S} \in \range[2]{|V|^5}$.
By construction, $|X|=|V|$ and $|S|=|V'|$.
By assumption, $|V'|=k$.
Hence, \Cref{Cla:UPPAbssize} ensures that\\  
$\numVsubV{P}{S}=(|V|-k)\cdot|V|^3 + k$, i.e.,
$\numVsubV{P}{S}=K$ as required.
\end{proof}

\begin{proof}[\textsc{Proof (\Cref{the:multi})}]
The proof follows directly from \Cref{the:VC} and \Cref{lem:redVC}:
The former states that the vertex cover problem for arbitrary $G=(V,E)$ and 
$k <|V|$ is NP-hard. 
The latter shows that we can solve the vertex cover problem by invoking the solution to our decision problem at most $|V|^5$ times. 
\end{proof}

\section{Experiments: Additional Information}
\begin{figure}
	\centering
		\includegraphics[width = \linewidth]{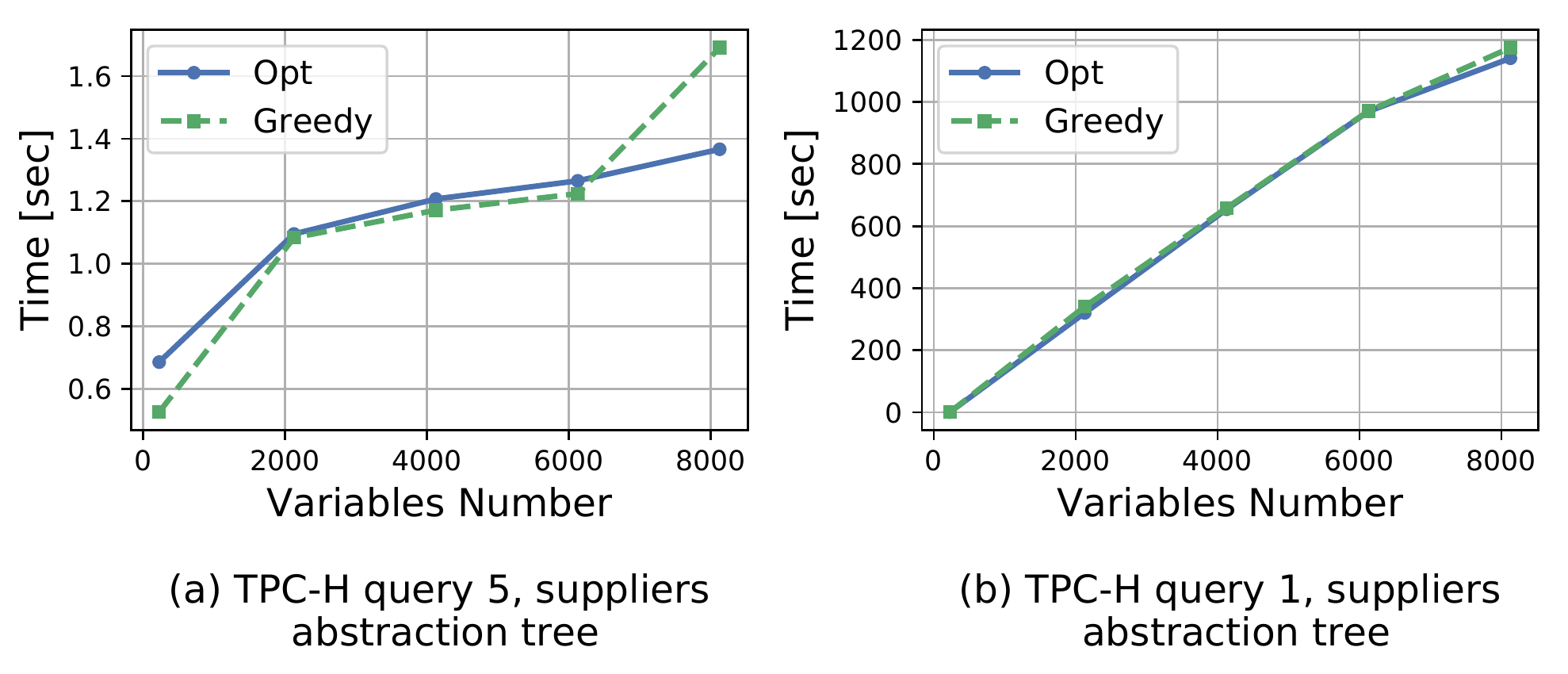}
		\caption{Compression Time as a Function of the Number of Variables}
			\label{fig:compVarNum}
\end{figure}

\mypara{Number of Variables}
To examine the affect of the number of variables on the algorithms' runtime, we used TPC-H 1Gb, and populated the database of our running example with 10,000 customers. The bound was set to 0.5 of the input polynomials size, and the number of variables varied up to 8000, of which 128 where the leaves of the abstraction tree.
Figure \ref{fig:compVarNum} presents the provenance compression time as a function of the number of variables in the input data for TPC-H queries 5 and 1. For those queries we observed moderate growth in the computation time, while the running time for query 10 and the running example query was roughly the same in all cases, and therefore the graphs are omitted. The difference stem from the affect of the variable number on the provenance size. Queries 1 and 5 generates relatively small number of polynomials, and thus, introducing new variables significantly increase the number of monomials per polynomial and the total provenance size, while query 10 and the running example query generate a large number of polynomials (993306 and 100,000 respectively), thus the affect of the additional variables on the total polynomials size is minor.

\begin{table}\small
	\begin{tabular}{ccccccc}
		\hline
		\multirow{2}{*}{Type} & \multirow{2}{*}{Nodes} & \multicolumn{4}{c}{Fan-out}        & \multicolumn{1}{c}{\multirow{2}{*}{VVS}} \\
		&                        & Root & Level 1 & Level 2 & Level 3 & \multicolumn{1}{l}{}                      \\ \hline \hline
		\multirow{6}{*}{1}    & 131                    & 2    & 64      & -       & -       &    5             \\
		& 133                    & 4    & 32      & -       & -       &          17                \\
		& 137                    & 8    & 16      & -       & -       &      257                               \\ 
		& 145                    & 16   & 8       & -       & -       &     65537  \\ 
		& 161                    & 32   & 2       & -       & -       &     4294967297                                  \\ 
		& 193                    & 64   & 2       & -       & -       &     1.84467E+19
		\\ \hline
		\multirow{5}{*}{2}    & 135 & 2    & 2      & 32       & -       &        26                                   \\
		&             139        & 2    & 4      & 16       & -       &         290                                  \\
		&              147       & 2    & 8      & 8       & -       &  66050                                         \\ 
		&               163      & 2   & 16       & 4       & -       &          4295098370                                 \\ 
		&              195       & 2   & 32       & 2       & -       &                1.84467E+19
		\\ \hline
		\multirow{4}{*}{3}    & 141 & 4    &   2   & 16       & -       &626                                           \\
		&                  149   & 4    & 4      & 8       & -       &   83522                                        \\
		&                   165  & 4    & 8      & 4       & -       &        4362470402                                   \\ 
		&                  197   & 4   & 16       & 2       & -       &                  1.84479E+19
		\\ \hline
		\multirow{3}{*}{4}    & 153 & 8    & 2      & 8       & -       &       390626                                    \\
		&                169     & 8    & 4      & 4       & -       &     6975757442                                      \\
		&                201     & 8    & 8      & 2       & -       &     1.90311E+19
		\\  \hline
		\multirow{4}{*}{5}    & 143 & 2    & 2      & 2       & 16       &       677                                    \\
		&                  151   & 2    & 2      & 4       & 8       &        84101                                   \\
		&                  167   & 2    & 2      & 8       & 4       &       4362602501                                    \\ 
		&                  199   & 2   & 2       & 16       & 2       &         1.84479E+19
		\\\hline
		\multirow{3}{*}{6}    & 155 & 2    & 4      & 2       & 16       &       391877                                    \\
		&              171       & 2    & 4      & 4       & 4       &        6975924485                                   \\
		&                203     & 2    & 4      & 2       & 8       &         1.90311E+19
		\\  \hline
		\multirow{3}{*}{7}    & 157 & 4    & 2      & 2       & 8       &     456977                                      \\
		&                173     & 4    & 2      & 4       & 4       &       7072810001                                    \\
		&                   205  & 4    & 2     & 8       & 2       &            1.90323E+19
		\\ \hline
	\end{tabular}
	\caption{Abstraction trees Types}
	\label{tbl:treesTypes}
\end{table}

\end{document}